\newcommand{\xbold}{\mathbf{x}}
\newcommand{\sbold}{\mathbf{s}}
\newcommand{\ybold}{\mathbf{y}}
\newcommand{\ubold}{\mathbf{u}}
\newcommand{\sigmabold}{\boldsymbol{\sigma}}
\newcommand{\Scal}{\mathcal{S}}
\newcommand{\Pcal}{\mathcal{P}}
\newcommand{\mbs}[1]{\ensuremath{\boldsymbol{#1}}}
\newcommand{\vtheta}{\boldsymbol{\theta}}
\newtheorem{definition}{Definition}
\newtheorem{proposition}{Proposition}
\newsavebox{\Figurebox}
\begin{document}

% Page heads
% \markboth{F. Hamze et al.}{Self-Avoiding Random Dynamics on Integer Complex Systems}

% Title portion
\title{Self-Avoiding Random Dynamics on Integer Complex Systems}

\author{
{\sc Firas Hamze}\thanks{D-wave Systems Inc.. Email: fhamze@dwavesys.com} \and {\sc Ziyu Wang}\thanks{University of British Columbia. Email: ziyuw@cs.ubc.ca} \and {\sc Nando de Freitas}\thanks{University of British Columbia. Email: nando@cs.ubc.ca} \\
}

% \author{FIRAS HAMZE
% \affil{D-Wave Systems}
% ZIYU WANG
% \affil{University of British Columbia}
% NANDO DE FREITAS
% \affil{University of British Columbia}}

\maketitle

\begin{abstract}
This paper introduces a new specialized algorithm for equilibrium Monte
Carlo sampling of binary-valued systems, which allows for large moves in the state space.
This is achieved by constructing self-avoiding walks (SAWs) in the state space. As a consequence, many bits are flipped in a single MCMC step. We name the algorithm
SARDONICS, an acronym for Self-Avoiding Random Dynamics on
  Integer Complex Systems. The algorithm has several free parameters, but we show that Bayesian optimization can be used to automatically tune them. SARDONICS performs remarkably well in a broad number of sampling tasks: toroidal ferromagnetic and frustrated Ising models, 3D Ising models, restricted Boltzmann machines and chimera graphs arising in the design of quantum computers. 
\end{abstract}

% {\bf Keywords:} Monte Carlo, Ising models, adaptive MCMC, Boltzmann machines, Gibbs sampling, Swendswen-Wang, SARDONICS

%\category{I.6.8}{Simulation and Modeling}{Monte Carlo}

%\terms{Algorithms}
% 
% \keywords{Monte Carlo, Ising models, adaptive MCMC, Boltzmann machines, Gibbs sampling, Swendswen-Wang, SARDONICS}

%\acmformat{Hamze, F., Wang, Z., and de Freitas, N.  2011. Self-Avoiding Random Dynamics on Integer Complex Systems.}

% \begin{bottomstuff}
% This work is supported by MITACS and D-Wave Systems.
% 
% Author's addresses: F. Hamze, D-Wave Systems Inc.,
% Burnaby, Canada;
% Z. Wang  {and} N. de Freitas,
% Computer Science Department, University of British Columbia, 
% Vancouver, Canada.
% \end{bottomstuf

\section{Introduction}
\label{sec:intro}

Ising models, also known as Boltzmann machines, are ubiquitous models in physics, machine learning and spatial statistics
\cite{Ackley-85,Besag-74,Newman-99,Kindermann-80,Hopfield-84}. They have recently lead to a revolution in unsupervised learning known as deep learning, see for example \cite{Hinton-06a,Memisevic-09,Ranzato-10b,Lee-09,Marlin-10}.
There is also a remarkably large number of other statistical inference problems, where one can apply Rao-Blackwellization \cite{Hamze-04,Robert-04,Liu-01} to integrate out all continuous variables and end up with a discrete distribution. Examples include topic modeling and Dirichlet processes \cite{Blei-03}, Bayesian variable selection \cite{Tham-02}, mixture models \cite{Liu-03} and multiple instance learning \cite{Kueck-05}. Thus, if we had effective ways of sampling from discrete distributions, we would solve a great many statistical problems. Moreover, since inference in Ising models can be reduced to max-SAT and counting-SAT problems \cite{Barahona-82,Welsh-99,Bian-11}, efficient Monte Carlo inference algorithms for Ising models would be applicable to a vast domain of computationally challenging problems, including constraint satisfaction and molecular simulation.

Many samplers have been introduced to make large moves in continuous state spaces. Notable examples are the Hamiltonian and Riemann Monte Carlo algorithms \cite{Duane-87,Neal-10,Girolami-11}. However, there has been little comparable effort when dealing with general discrete state spaces. One the most popular algorithms in this domain is the Swendsen-Wang algorithm \cite{Swendsen-87}. This algorithm, as shown here, works well for sparse planar lattices, but not for densely connected graphical models. For the latter problems, acceptance rates to make large moves can be very small. For example, as pointed out in \cite{Munoz-03} Ising models with Metropolis
dynamics can require $10^{15}$ trials to leave a metastable state at low temperatures, and such a simulation would take $10^{10}$ minutes. For some of these models, it is however possible to compute the rejection probability of the next move. This leads to more efficient algorithms that always accept the next move \cite{Munoz-03,Hamze-07}. The problem with this is that at the next iteration the most favorable move is often to go back to the previous state. That is, these samplers may often get trapped in cycles.

To overcome this problem, this paper presents a specialized algorithm for equilibrium Monte
Carlo sampling of binary-valued systems, which allows for large moves in the state space.
This is achieved by constructing self-avoiding walks (SAWs) in the state space. As a consequence, many bits are flipped in a single MCMC step. 

We proposed a variant of this strategy
for constrained binary distributions in \cite{Hamze-10}. The method presented here applies to unconstrained systems. It has many advancements, but more free parameters than our previous version, thus making the sampler hard to tune. For this reason, we adopt a Bayesian optimization strategy \cite{Mockus-82,Brochu-09,Mahendran-11} to automatically tune these free parameters, thereby allowing for the construction of parameter policies that trade-off exploration and exploitation effectively.

Monte Carlo algorithms for generating SAWs for polymer simulation originated with \cite{Rosenbluth-55}.  More recently, biased SAW processes were adopted as proposal distributions for Metropolis algorithms in \cite{Siepmann-92}, where the method was named \emph{configurational bias Monte Carlo}. The crucial distinction between these seminal works and ours is that those authors were concerned with simulation of physical systems that inherently posess the self-avoidance property, namely molecules in some state space. More specifically, the physics of such systems dictated that no component of a molecule may occupy the same spatial location as another; the algorithms they devised took this constraint into account. In contrast, we are treating a different problem, that of sampling from a binary state-space, where a priori, no such requirement exists. Our construction involves the idea of imposing self-avoidance on sequences of generated states as a process to instantiate a rapidly-mixing Markov Chain on the binary state-space. It is therefore more related to the class of optimization algorithms known as Tabu Search \cite{Glover-89} than to classical polymer SAW simulation methods. To our knowledge, though, a correct equilibrium Monte Carlo algorithm using the Tabu-like idea of self-avoidance in state-space trajectories has not yet been proposed.

% FH 16.NOV: This part may be confusing! It has little to do with population MCMC!
 %As discussed in \cite{Liu-01}, this technique can be interpreted as a sequential importance sampling based Metropolized independence sampler. It is therefore more closely related to particle MCMC \cite{Andrieu-10} than to the method proposed here. 

We should also point out that sequential Monte Carlo (SMC) methods, such as Hot Coupling and annealed importance sampling, have been proposed to sample from Boltzmann machines \cite{Hamze-05,Salakhutdinov-08}. Since such samplers often use an MCMC kernel as proposal distribution, the MCMC sampler proposed here could enhance those techniques. The same observation applies when considering other meta-MCMC strategies such as parallel tempering \cite{Geyer-91,Earl-05}, multicanonical Monte Carlo \cite{Berg-91,Gubernatis-00} and Wang-Landau \cite{Wang-01} sampling.

\section{Preliminaries}
\label{sec:prelims}

Consider a binary-valued system defined on the state space \( \Scal
\triangleq \{0, 1\}^{M} \), i.e. consisting of \( M \) variables each of
which can be \( 0 \) or \( 1 \). The probability of a state \( \xbold
= [x_{1}, \ldots x_{M}] \) is given by the Boltzmann distribution:
\begin{equation}
  \label{eq:BoltzDist}
  \pi(\xbold) = \frac{1}{Z(\beta)} e ^{-\beta E(\xbold) } 
\end{equation} 
where \( \beta \) is an \emph{inverse temperature.} An instance of
such a system is the ubiquitous \emph{Ising model} of statistical
physics, also called a \emph{Boltzmann machine} by the machine
learning community. Our aim in this paper is the generation of states
distributed according to a Boltzmann distribution specified by a
particular energy function \( E(\xbold) \). 

A standard procedure is to apply one of the \emph{local} Markov Chain
Monte Carlo (MCMC) methodologies such as the Metropolis algorithm or
the Gibbs (heat bath) sampler. As is well-known, these algorithms can
suffer from issues of poor equilibration (``mixing'') and trapping in
local minima at low temperatures. More sophisticated methods such as
{Parallel Tempering} \cite{Geyer-91} and the {flat-histogram}
algorithms (e.g. multicanonical \cite{Berg-91} or Wang-Landau \cite{Wang-01}
sampling) can often dramatically mitigate this problem, but they
usually still rely on local MCMC at some stage. The ideas presented in
this paper relate to MCMC sampling using larger changes of state than
those of local algorithms. They can be applied on their own or in
conjunction with the previously-mentioned advanced methods. In this
paper, we focus on the possible advantage of our algorithms over local methods.

Given a particular state \( \xbold \in \Scal \), we denote by \(
\Scal_{n}( \xbold ) \) the set of all states at Hamming distance \( n
\) from \( \xbold \). For example if \( M=3 \) and \( \xbold =
[1,1,1] \), then \( \Scal_{0}( \xbold ) = \{ [1,1,1] \} \), \(
\Scal_{1}(\xbold) = \{ [0,1,1], [1,0,1], [1,1,0] \} \), etc. Clearly
\( | \Scal_{n}( \xbold ) | = \binom{M}{n} \).

We define the set of \emph{bits} in two states \( \xbold, \ybold \)
that agree with each other: let \( \Pcal( \xbold, \ybold ) = \{ i |
x_{i} = y_{i} \} \). For instance if \( \xbold = [0,1,0,1] \) and \(
\ybold = [0,0,0,1] \), then \( \Pcal(\xbold, \ybold ) = \{ 1,3,4 \}
\). Clearly, \( \Pcal(\xbold, \ybold ) = \Pcal(\ybold, \xbold) \)
and \( \Pcal(\xbold, \xbold) = \{ 1, 2, \ldots M \} \).

Another useful definition is that of the \emph{flip operator}, which
simply inverts bit \( i \) in a state, \( F(\xbold, i) \triangleq (
x_{1}, \ldots, \bar{x_{i}}, \ldots, x_{M} )\), and its extension
that acts on a sequence of indices, i.e. \( F( \xbold, i_{1}, i_{2},
\ldots, i_{k} ) \triangleq F( F( \ldots F( \xbold, i_{1}), i_{2}),
\ldots, i_{k} )\).

For illustration, we describe a simple Metropolis
algorithm in this framework. Consider a state \( \xbold \in \Scal \);
a new state \( \xbold' \in \Scal_{1}(\xbold) \) is generated by
flipping a random bit of \( \xbold \), i.e. \(  \xbold' = F(\xbold,i)
\) for \( i \) chosen uniformly from \( \{1,\ldots, M \} \), and
accepting the new state with probability:
\begin{equation}
\label{eq:MetropDumb}
\alpha = \min(1, e^{-\beta( E(\xbold')-E(\xbold) ) } ) 
\end{equation} 
If all the \emph{single-flip neighbors} of \( \xbold \), i.e. the states
resulting from \( \xbold \) via a single bit perturbation, are of
higher energy than \( E(\xbold) \), the acceptance rate will be small
at low temperatures.

\section{SARDONICS}
\label{sec:sardonics}

In contrast to the traditional single-flip MCMC algorithms, the
elementary unit of our algorithm is a type of move that allows for
large changes of state \emph{and} tends to propose them such that they
are energetically favourable. We begin by describing the move operator
and its incorporation into the Monte Carlo algorithm we call
SARDONICS, an acronym for \emph{Self-Avoiding Random Dynamics on
  Integer Complex Systems.} This algorithm will then be shown to
satisfy the theoretical requirements that ensure correct asymptotic
sampling. The move procedure aims to force exploration away from the
current state. In that sense it has a similar spirit to the \emph{Tabu
  Search} \cite{Glover-89} optimization heuristic, but the aim of
SARDONICS is equilibrium sampling, a problem more general than (and in
many situations at least as challenging as) minimization.

Suppose that we have a state \( \xbold_{0} \) on \( \Scal \). It is
possible to envision taking a special type of biased \emph{self-avoiding
  walk} (SAW) of length \( k \) \emph{in the state space}, in other
words a sequence of states such that no state recurs in the sequence.
In this type of SAW, states on sets \( \{ \Scal_{1}( \xbold_{0} )
\ldots \Scal_{k}( \xbold_{0} ) \} \) are visited consecutively. To
generate this sequence of states \( (\ubold_{1}, \ldots, \ubold_{k} )
\), at step \( i \) of the procedure a bit is chosen to be flipped
from those that have not yet flipped relative to \( \xbold_{0} \).
Specifically, an element \( \sigma_{i} \) is selected from \( \Pcal(
\xbold_{0}, \ubold_{i-1} ) \), with \( \ubold_{0} \triangleq
\xbold_{0} \), to yield state \( \ubold_{i} = F( \ubold_{i-1},
\sigma_{i} ) \), or equivalently, \( \ubold_{i} = F( \xbold_{0},
\sigma_{1}, \ldots, \sigma_{i}) \). From \( \ubold_{i-1} \in
\Scal_{i-1}(\xbold_{0}) \), the set of states on \( \Scal_{i} \) that
can result from flipping a bit in \( \Pcal( \xbold_{0}, \ubold_{i-1} )
\) are called the \emph{single-flip neighbours} of \( \ubold_{i-1} \)
on \( \Scal_{i}( \xbold_{0} ) \).
% neighbors are the states- not to be confused with allowable moves
The set of bits that can flip with nonzero probability are called the
\emph{allowable moves} at each step. A diagrammatic depiction of the
SAW is shown in Figure \ref{fig:SAW}.

The elements \( \{\sigma_{i}\} \) are sampled in an
\emph{energy-biased} manner as follows:

\begin{equation}
  \label{eq:fsawstep}
  f( \sigma_{i} = l | \sigma_{i-1}, \ldots, \sigma_{1}, \xbold_{0} ) = \left\{
    \begin{array}{ll}
      \frac{e^{-\gamma
          E( F(\ubold_{i-1}, l) ) }}{\sum_{ j \in \Pcal(\xbold_{0},\ubold_{i-1}) }e^{-\gamma E( F(\ubold_{i-1},j)
          )}} & 

      \begin{array}{ll} l \in \Pcal(\xbold_{0},\ubold_{i-1})
        \textrm{ and } \\ \ubold_{i-1} =
      F(\xbold_{0}, \sigma_{1}, \ldots, \sigma_{i-1}) \end{array} \\ \\
      0 & \textrm{otherwise}
    \end{array}
  \right.
\end{equation}
The larger the value that simulation parameter \( \gamma \) is
assigned, the more likely the proposal \( f \) is to sample the
lower-energy neighbors of \( \ubold_{i-1} \). Conversely if it is
zero, a neighbor on \( \Scal_{i}(\xbold_{0}) \) is selected completely
at random. In principle, the value of \( \gamma \) will be seen to be
arbitrary; indeed it can even be different at each step of the SAW.
We wil have more to say about the choice of \( \gamma \), as well as
the related issue of the SAW lengths, in Section \ref{sec:Adaptation}.

The motivation behind using such an energy-biased scheme is that when
proposing large changes of configuration, it may generate final states
that are ``typical'' of the system's target distribution. To make a
big state-space step, one may imagine uniformly perturbing a large
number of bits, but this is likely to yield states of high energy, and
an MCMC algorithm will be extremely unlikely to accept the move at low
temperatures.

\begin{figure}
\centering
\includegraphics[width=0.8\textwidth]{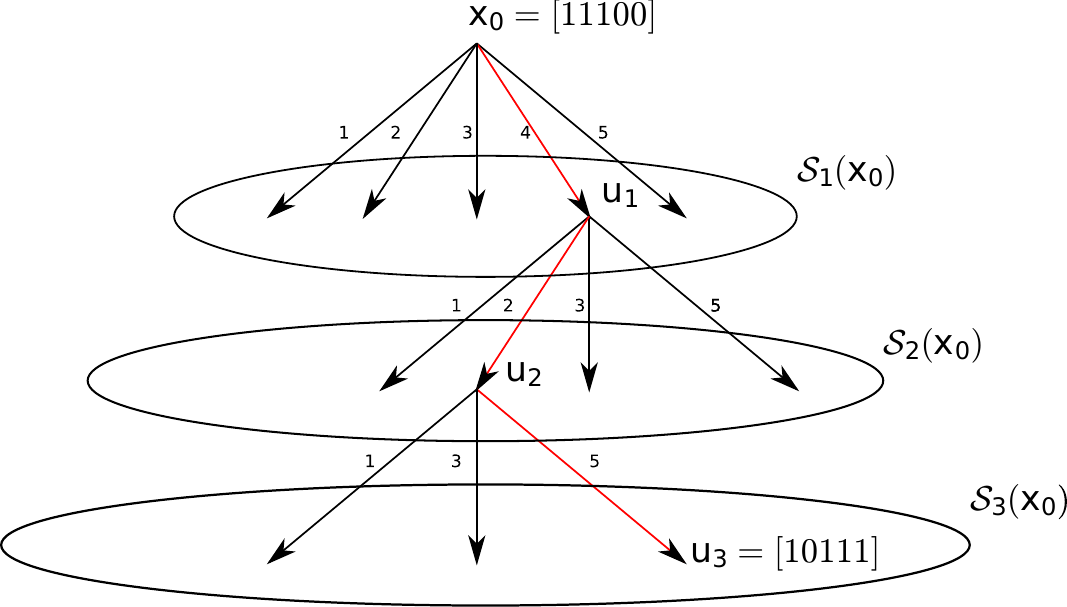}
\caption{ A visual illustration of the move process for a SAW of
 length 3. The arrows represent the allowable moves from a state at
 that step; the red arrow shows the actual move taken in this
 example. With the system at state \( \xbold_{0} \), the SAW
 begins. Bit 4 of \( \xbold_{0} \) has been sampled for flipping
 according to Equation \ref{eq:fsawstep} to yield state
 \(\ubold_{1}=[11110]\); the process is repeated until state \(
 \ubold_{3} \) on \( \Scal_{3}(\xbold_{0}) \) is reached. The
 sequence of states taken by the SAW is \( \sigmabold = [4,2,5] \) .}
\label{fig:SAW}
\end{figure}

At this point it can be seen why the term ``self-avoiding'' aptly
describes the processes. If we imagine the state-space to be a
high-dimensional lattice, with the individual states lying at the
vertices and edges linking states that are single-flip neighbors, a
self-avoiding walk on this graph is a sequence of states that induce a
connected, acyclic subgraph of the lattice. In the move procedure we
have described, a state can never occur twice at any stage within it
and so the process is obviously self-avoiding.

Note however that the construction imposes a stronger condition on the
state sequence; once a transition occurs from state \( \xbold \) to
state \( F(\xbold, i ) \), not only may state \( \xbold \) not appear
again, but neither may \emph{any} state \( \ybold \) with \( y_{i} = x_{i} \). It
seems natural to ask why not to use a less constrained SAW process,
namely one that avoids returns to \emph{individual} states and likely
more familiar to those experienced in molecular and polymer
simulations, without eliminating an entire dimension at each step.  In
our experience, trying to construct such a SAW as a proposal requires
excessive computational memory and time to yield good state-space
traversal. A local minimum ``basin'' of a combinatoric landscape can
potentially contain a massive number of states, and a process that
moves by explicitly avoiding particular states may be doomed to wander
within and visit a substantial portion of the basin prior to escaping.

Let \( \xbold_{1} \) be the \emph{final} state reached by the SAW, i.e. \(
\ubold_{k} \) for a length \( k \) walk. By multiplying the SAW
flipping probabilities, we can straightforwardly obtain the
probability of moving from state \( \xbold_{0} \) to \( \xbold_{1} \)
\emph{along the SAW} \( \sigmabold \), which we
call \( f( \xbold_{1}, \sigmabold | \xbold_{0} )\):
\begin{equation}
\label{eq:fsaw}
f( \xbold_{1}, \sigmabold | \xbold_{0} ) \triangleq 
\delta_{\xbold_{1}}[ F(\xbold_{0}, \sigmabold )] \prod_{i=1}^{k} f( \sigma_{i} | \ubold_{i-1} )
\end{equation}
The delta function simply enforces the fact that the final
\emph{state} \( \xbold_{1} \) must result from the sequence of flips in \(
\sigmabold \) from \( \xbold_{0} \). The set of \( \{\sigmabold\} \)
such that \( f(\xbold_{1}, \sigmabold|\xbold_{0}) >0 \) are termed the
\emph{allowable SAWs} between \( \xbold_{0} \) and \( \xbold_{1} \).

Ideally, to implement a Metropolis-Hastings (MH) algorithm using the
SAW proposal, we would like to evaluate the \emph{marginal}
probability of proposing \( \xbold_{1} \) from \( \xbold_{0} \), which
we call \( f( \xbold_{1} | \xbold_{0} ) \), so that the move would be
accepted with the usual MH ratio: 
\begin{equation}
\label{eq:margAlpha}
 \alpha_{m}(\xbold_{0}, \xbold_{1}) \triangleq \min\bigg(1, \frac{\pi(\xbold_{1})f(
  \xbold_{0} | \xbold_{1}
  )}{\pi(\xbold_{0})f( \xbold_{1} | \xbold_{0} )}\bigg)
 \end{equation}
 Unfortunately, for all but small values of the walk lengths \( k \),
 marginalization of the proposal is intractable due to the potentially
 massive number of allowable SAWs between the two states.

To assist in illustrating our solution to this, we recall that a
sufficient condition for a Markov transition kernel \( K \) to have
target \( \pi \) as its stationary distribution is \emph{detailed
  balance:}
\begin{equation}
\label{eq:DB}
\pi(\xbold_{0})K( \xbold_{1} | \xbold_{0} ) = \pi(\xbold_{1})K( \xbold_{0} | \xbold_{1} ) 
\end{equation}
One special case obtains if we used the marginalized proposal \(
f(\xbold_{1}|\xbold_{0}) \) followed by the MH accept rule,
\begin{equation} 
\label{eq:margK} K_m( \xbold_{1} | \xbold_{0})
\triangleq f(\xbold_{1}|\xbold_{0}) \alpha_{m}(\xbold_{0}, \xbold_{1})
\end{equation}
As we cannot compute \( f(\xbold_{1}|\xbold_{0}) \), we shall use a
kernel \( K(\xbold_{1}, \sigmabold |\xbold_{0}) \) defined on
the \emph{joint space} of SAWs and states, and show that with some
care, detailed balance (\ref{eq:DB}) can still \emph{hold marginally}.
It will be clear, though that this does \emph{not} mean that the
resultant marginal kernel \( K( \xbold_{1} | \xbold_{0} ) \) is the
same as that in (\ref{eq:margK}) obtained using MH acceptance on the
marginal \emph{proposal} .

Define the \emph{sequence reversal operator} \( R(\sigmabold) \) to
simply return a sequence consisting of the elements of \( \sigmabold
\) in reverse order; for example \( R([2,3,1,4]) = [4,1,3,2] \). One
can straightforwardly observe that each allowable SAW \( \sigmabold \)
from \( \xbold_{0} \) to \( \xbold_{1} \) can be uniquely mapped to
the allowable SAW \( R(\sigmabold) \) from \( \xbold_{1} \) to \(
\xbold_{0} \). For example in Figure \ref{fig:SAW}, the SAW \(
R(\sigmabold)=[5,2,4] \) can be seen to be allowable from \(
\xbold_{1} \) to \( \xbold_{0} \). Next, we have the following
somewhat more involved concept, a variant of which we introduced in
\cite{Hamze-10}:

\begin{definition} 
\label{def:pathDB}
Consider a Markov kernel \( K( \xbold_{1}, \sigmabold | \xbold_{0} )
\) whose support set coincides with that of (\ref{eq:fsaw}). We say
that \emph{pathwise detailed balance} holds if \[ \pi(\xbold_{0})K(
\xbold_{1}, \sigmabold | \xbold_{0} ) = \pi(\xbold_{1})K( \xbold_{0},
R(\sigmabold) | \xbold_{1} ) \] for all \( \sigmabold, \xbold_{0},
\xbold_{1} \).
\end{definition}

It turns out that pathwise detailed balance is a \emph{stronger
  condition} than marginal detailed balance. In other words, 

\begin{proposition}
\label{prop:pathmarg}
If the property in Definition \ref{def:pathDB} holds for a
transition kernel \( K\) of the type described there, then \(
\pi(\xbold_{0})K( \xbold_{1} | \xbold_{0} ) = \pi(\xbold_{1})K(
\xbold_{0} | \xbold_{1} ) \)
\end{proposition}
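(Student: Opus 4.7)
The plan is to obtain marginal detailed balance by summing the pathwise detailed balance identity over the set of allowable SAWs connecting the two states. By definition, the marginal kernel satisfies
\[
K(\xbold_1 \mid \xbold_0) = \sum_{\sigmabold} K(\xbold_1, \sigmabold \mid \xbold_0),
\]
where the sum ranges over all allowable SAWs $\sigmabold$ from $\xbold_0$ to $\xbold_1$ (for other $\sigmabold$ the joint kernel is zero by the support assumption). So I would apply this summation to both sides of the pathwise identity from Definition \ref{def:pathDB}.

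The left-hand side directly collapses to $\pi(\xbold_0) K(\xbold_1 \mid \xbold_0)$. For the right-hand side, the summand is $\pi(\xbold_1) K(\xbold_0, R(\sigmabold) \mid \xbold_1)$, which is indexed by $\sigmabold$ but evaluated at $R(\sigmabold)$. The key step is to re-index: the paper has already observed that $R$ is a bijection between the allowable SAWs from $\xbold_0$ to $\xbold_1$ and the allowable SAWs from $\xbold_1$ to $\xbold_0$ (with $R \circ R$ the identity). Under the change of variables $\sigmabold' = R(\sigmabold)$, the sum becomes
\[
\sum_{\sigmabold'} \pi(\xbold_1) K(\xbold_0, \sigmabold' \mid \xbold_1) = \pi(\xbold_1) K(\xbold_0 \mid \xbold_1),
\]
where $\sigmabold'$ now ranges over the allowable SAWs from $\xbold_1$ to $\xbold_0$. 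Equating the two marginalized sides yields the claimed marginal detailed balance identity.

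There is no real obstacle here; the proposition is essentially a book-keeping consequence of pathwise detailed balance combined with the involutive bijection $R$. The only point that needs care is justifying the re-indexing, namely checking that $R$ maps the index set of the left-hand sum exactly onto the index set of the right-hand sum and is its own inverse, which is immediate from the definition of $R$ and the observation (made just before the definition) that reversing an allowable SAW from $\xbold_0$ to $\xbold_1$ yields an allowable SAW from $\xbold_1$ to $\xbold_0$.
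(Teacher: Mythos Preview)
Your proof is correct and follows essentially the same approach as the paper: sum the pathwise detailed balance identity over allowable SAWs from \(\xbold_0\) to \(\xbold_1\), marginalize the left-hand side directly, and use the reversal bijection \(R\) to re-index the right-hand side as a sum over allowable SAWs from \(\xbold_1\) to \(\xbold_0\). Your write-up is slightly more explicit about \(R\) being an involution, but the argument is the same.
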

\begin{proof}
Suppose, for given \( \xbold_{0}, \xbold_{1} \), we summed both
sides of the equation enforcing pathwise detailed balance over all
allowable SAWs \( \{\sigmabold'\} \) from \( \xbold_{0} \) to
\( \xbold_{1} \), i.e. 
\[
\sum_{\sigmabold'} \pi(\xbold_{0})K( \xbold_{1}, \sigmabold'| \xbold_{0} ) =
\sum_{\sigmabold'}\pi(\xbold_{1})K( \xbold_{0}, R(\sigmabold') | \xbold_{1}
)
\]
The left-hand summation marginalizes the kernel over allowable
SAWs and hence results in \( \pi(\xbold_{0})K(\xbold_{1} | \xbold_{0}
) \). The observation above that each allowable SAW from \(
\xbold_{0} \) to \( \xbold_{1} \) can be reversed to yield an
allowable one from \( \xbold_{1} \) to \( \xbold_{0} \) implies that
the right-hand side is simply a re-ordered summation over all
allowable SAWs from \( \xbold_{1} \) to \( \xbold_{0} \), and can thus
be written as \( \pi(\xbold_{1})K(\xbold_{0} | \xbold_{1}) \).
\end{proof}

We are now ready to state the final form of the algorithm, which
can be seen to instantiate a Markov chain satisfying pathwise detailed
balance. After proposing \( (\xbold_{1}, \sigmabold) \) using
the SAW process, we accept the move with the ratio:
\begin{equation}
\label{eq:pathAlpha}
\alpha(\xbold_{0}, \xbold_{1}, \sigmabold ) \triangleq \min
\bigg(  1, \frac{\pi(\xbold_{1}) f( \xbold_{0},  R(\sigmabold) | \xbold_{1} )}{\pi(\xbold_{0}) f( \xbold_{1},
  \sigmabold | \xbold_{0})} \bigg)
\end{equation}
 
The computational complexity of evaluating this accept ratio is of the
same order as that required to sample the proposed SAWs/state; the
only additional operations required are those needed to evaluate the
reverse proposal appearing in the numerator, which are completely
analogous to those involved in calculating the forward proposal.

Let us take a closer look at the marginal transition kernel \( K(\xbold_{1} |
\xbold_{0} ) \). We can factor the joint \emph{proposal} into:
\[
f( \xbold_{1}, \sigmabold | \xbold_{0} ) = f( \xbold_{1} |
\xbold_{0} ) f( \sigmabold | \xbold_{0}, \xbold_{1} )
\]
Of course, if we are assuming that \( f(\xbold_{1} | \xbold_{0} ) \)
is intractable to evaluate, then the conditional \( f( \sigmabold |
\xbold_{0}, \xbold_{1} ) \) must be so as well, but it is useful to
consider. If we now summed both sides of the \emph{joint} probability
of moving from \( \xbold_{0} \) to \( \xbold_{1} \) over allowable
paths, we would observe:
\[
\sum_{\sigmabold' }\pi( \xbold_{0} ) K( \xbold_{1},
\sigmabold'| \xbold_{0} )  = \pi(\xbold_{0} ) f(
\xbold_{1} | \xbold_{0} )\sum_{\sigmabold' } f( 
\sigmabold'| \xbold_{0}, \xbold_{1} ) \alpha( \xbold_{0},
\xbold_{1}, \sigmabold' )
\]
The summation on the right-hand side is thus the
\emph{conditional expectation} of the accept rate given that we are
attempting to move from \( \xbold_{0} \) to \( \xbold_{1} \); we call it
\begin{equation}
\label{eq:effAlpha}
\alpha( \xbold_{0}, \xbold_{1} ) \triangleq \sum_{\sigmabold'} f( \sigmabold'| \xbold_{0}, \xbold_{1} )
\alpha( \xbold_{0}, \xbold_{1}, \sigmabold')
\end{equation}
and it defines an \emph{effective acceptance rate} between \(
\xbold_{0}\) and \( \xbold_{1} \) under the sampling regime described
since \( K( \xbold_{1} | \xbold_{0} )=f(\xbold_{1} | \xbold_{0} )
\alpha( \xbold_{0}, \xbold_{1} ) \). It is not difficult to show that \( \alpha(\xbold_{0},
\xbold_{1} ) \neq \alpha_{m}(\xbold_{0}, \xbold_{1} ) \), i.e. the
marginal accept rate for the joint proposal is not the same as the one
that results from using the marginalized proposal. In fact we can make a stronger statement:
\begin{proposition}
\label{prop:alphainequality}
For every pair of states \( (\xbold_{0}, \xbold_{1}) \), \(
\alpha(\xbold_{0},\xbold_{1}) \leq \alpha_m(\xbold_{0},\xbold_{1}) \)
\end{proposition}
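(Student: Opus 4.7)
The plan is to reduce the inequality to Jensen's inequality applied to the concave function $\phi(x) = \min(1,x)$, after factoring the joint proposals in a way that isolates the marginal Metropolis–Hastings ratio.

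First I would factor $f(\xbold_1, \sigmabold \mid \xbold_0) = f(\xbold_1 \mid \xbold_0)\, f(\sigmabold \mid \xbold_0, \xbold_1)$ and similarly $f(\xbold_0, R(\sigmabold) \mid \xbold_1) = f(\xbold_0 \mid \xbold_1)\, f(R(\sigmabold) \mid \xbold_1, \xbold_0)$. Substituting into \eqref{eq:pathAlpha} lets me write
\[
\alpha(\xbold_0, \xbold_1, \sigmabold) = \min\bigl(1,\; r_m \cdot r_{\sigmabold}\bigr),
\]
where $r_m \triangleq \frac{\pi(\xbold_1) f(\xbold_0 \mid \xbold_1)}{\pi(\xbold_0) f(\xbold_1 \mid \xbold_0)}$ (so that $\alpha_m(\xbold_0,\xbold_1) = \min(1, r_m)$) and $r_{\sigmabold} \triangleq \frac{f(R(\sigmabold) \mid \xbold_1, \xbold_0)}{f(\sigmabold \mid \xbold_0, \xbold_1)}$ carries all the dependence on the particular path.

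Next I would apply Jensen's inequality. The function $\phi(x) = \min(1,x)$ is concave on $[0,\infty)$, so taking expectation with respect to the conditional path distribution $f(\sigmabold \mid \xbold_0, \xbold_1)$ gives
\[
\alpha(\xbold_0,\xbold_1) = \sum_{\sigmabold'} f(\sigmabold' \mid \xbold_0, \xbold_1)\, \phi\bigl(r_m \cdot r_{\sigmabold'}\bigr)
\;\leq\; \phi\!\left( \sum_{\sigmabold'} f(\sigmabold' \mid \xbold_0, \xbold_1)\, r_m \cdot r_{\sigmabold'} \right).
\]
The inner sum telescopes nicely: $f(\sigmabold' \mid \xbold_0, \xbold_1)\, r_{\sigmabold'} = f(R(\sigmabold') \mid \xbold_1, \xbold_0)$, and because $\sigmabold' \mapsto R(\sigmabold')$ is a bijection between the allowable SAWs from $\xbold_0$ to $\xbold_1$ and those from $\xbold_1$ to $\xbold_0$ (already noted in the paragraph preceding Definition \ref{def:pathDB}), we have $\sum_{\sigmabold'} f(R(\sigmabold') \mid \xbold_1, \xbold_0) = 1$. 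Hence the right-hand side equals $\phi(r_m) = \alpha_m(\xbold_0, \xbold_1)$, completing the proof.

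The only genuinely non-routine step is recognizing that the sum $\sum_{\sigmabold'} f(\sigmabold' \mid \xbold_0, \xbold_1)\, r_{\sigmabold'}$ collapses to $1$ via the reversal bijection; once one sees this, Jensen does the rest. A mild care point is that some of the reverse-conditional densities may be zero on certain paths, but this is benign since $\phi$ is bounded and continuous, and the factorization is valid on the common support of both proposals (which coincides with the allowable SAWs by the pathwise-reversibility structure).
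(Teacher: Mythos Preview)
Your proof is correct. The key ingredient---that $\sum_{\sigmabold'} f(\sigmabold'\mid\xbold_0,\xbold_1)\,r_{\sigmabold'} = \sum_{\sigmabold'} f(R(\sigmabold')\mid\xbold_1,\xbold_0)=1$ via the reversal bijection---is exactly the same observation the paper uses, so in substance the arguments coincide.

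Where they differ is in packaging. The paper does not invoke Jensen; instead it sets $C \triangleq \frac{\pi(\xbold_0)f(\xbold_1\mid\xbold_0)}{\pi(\xbold_1)f(\xbold_0\mid\xbold_1)} = 1/r_m$, splits the path sum into $\mathcal{A}=\{\sigmabold:\, r_{\sigmabold}\geq C\}$ and its complement (i.e., the regions where the $\min$ is $1$ versus the ratio), and then bounds the $\mathcal{A}$ part using $f(\sigmabold\mid\xbold_0,\xbold_1)\leq \tfrac{1}{C}f(R(\sigmabold)\mid\xbold_1,\xbold_0)$ to collapse everything to $1/C$. This is precisely Jensen's inequality for the piecewise-linear concave function $\min(1,\cdot)$ worked out by hand. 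Your version is cleaner and more transparent; as a small bonus, it directly yields $\alpha\leq\min(1,r_m)=\alpha_m$, whereas the paper's identification of $1/C$ with $\alpha_m$ tacitly assumes $r_m\leq 1$ (the case $r_m>1$ is of course trivial since $\alpha\leq 1$, but the Jensen formulation handles it without a separate remark).
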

\begin{proof}
  For conciseness, denote \( \frac{1}{\alpha_m(\xbold_{0},\xbold_{1})}
  =  \frac{\pi(\xbold_{0})f(\xbold_{1}|\xbold_{0})}{ \pi(\xbold_{1})f(\xbold_{0}|\xbold_{1})}   \) by \(
  C \).  Define the sets \( \mathcal{A} \triangleq \{ \sigmabold |
    \frac{f(R(\sigmabold)|\xbold_{1}, \xbold_{0})}
    {f(\sigmabold|\xbold_{0}, \xbold_{1} ) } \geq C \} \) and \( \mathcal{\bar{A}} \triangleq \{ \sigmabold |
    \frac{f(R(\sigmabold)|\xbold_{1}, \xbold_{0})}
    {f(\sigmabold|\xbold_{0}, \xbold_{1} ) } < C \} \).
    Then 
\begin{eqnarray}
     \alpha(\xbold_{0}, \xbold_{1}) &=& \sum_{\sigmabold'} f(
    \sigmabold'| \xbold_{0}, \xbold_{1} ) \min
\bigg(  1, \frac{\pi(\xbold_{1}) f( \xbold_{0}|\xbold_{1}) f(
  R(\sigmabold') | \xbold_{1},\xbold_{0} )}{\pi(\xbold_{0}) f(
  \xbold_{1} | \xbold_{0} )f(
  \sigmabold' | \xbold_{0}, \xbold_{1})} \bigg) \nonumber \\ &=& 
\sum_{\sigmabold' \in \mathcal{A}} f(
    \sigmabold'| \xbold_{0}, \xbold_{1} ) + 
\frac{\pi(\xbold_{1}) f( \xbold_{0}|\xbold_{1}) }{\pi(\xbold_{0}) f(
  \xbold_{1} | \xbold_{0} )}
\sum_{\sigmabold' \in
      \mathcal{\bar{A}}} f(
  R(\sigmabold') | \xbold_{1},\xbold_{0} ) \nonumber \\ &=&
\sum_{\sigmabold' \in \mathcal{A}} f(
    \sigmabold'| \xbold_{0}, \xbold_{1} ) + \frac{1}{C} \sum_{\sigmabold' \in
      \mathcal{\bar{A}}} f(
  R(\sigmabold') | \xbold_{1},\xbold_{0} ) \nonumber
\end{eqnarray}

But by definition, for \( \sigmabold \in \mathcal{A} \), \( f(
  \sigmabold | \xbold_{0},\xbold_{1} ) \leq \frac{1}{C} f(
  R(\sigmabold) | \xbold_{1},\xbold_{0} ) \).

Therefore,  
\begin{eqnarray}
 \alpha(\xbold_{0}, \xbold_{1})  &\leq& \frac{1}{C}\sum_{\sigmabold' \in \mathcal{A}} f(
    R(\sigmabold')| \xbold_{1}, \xbold_{0} ) + \frac{1}{C} \sum_{\sigmabold' \in
      \mathcal{\bar{A}}} f(
  R(\sigmabold') | \xbold_{1},\xbold_{0} ) \nonumber \\ &=& 
\frac{1}{C}\Big( \sum_{\sigmabold' \in \mathcal{A}} f(
    R(\sigmabold')| \xbold_{1}, \xbold_{0} ) + \sum_{\sigmabold' \in
      \mathcal{\bar{A}}} f(
  R(\sigmabold') | \xbold_{1},\xbold_{0} ) \Big) \nonumber \\ &=&
\frac{1}{C} 
= \alpha_m(\xbold_{0},\xbold_{1}) \nonumber
\end{eqnarray}
\end{proof}
% UNASWERED: HOW CAN WE COMPARE THE PERFORMANCE OF THE JOINT VS MARGINAL
% PROPOSAL?! 

There is another technical consideration to address. The reader may
have remarked that while detailed balance does indeed hold for our
algorithm, if the SAW length is constant at \( k > 1 \), then the
resulting Markov chain is no longer \emph{irreducible.} In other
words, not every \( \xbold \in \Scal \) is reachable with nonzero
probability regardless of the initial state. For example if \( k = 2
\) and the initial state \( \xbold_{0} = [0,0,0,0,0,0] \), then the
state \( \xbold=[0,0,0,0,0,1] \) can never be visited. Fortunately,
this is a rather easy issue to overcome; one possible strategy is to
randomly choose the SAW length prior to each step from a set that
ensures that the whole state space can eventually be visited. A
trivial example of such a set is any collection of integers that
include unity, i.e. such that single-flip moves are allowed. Another
is a set that includes consecutive integers, i.e. \( \{ k_{0}, k_{0}+1
\} \) for any \( k_{0} < M \). This latter choice could allow states
separated by a single bit to occur in two steps; in the example above,
if the set of lengths was \( \{3, 4 \} \) then we could have \(
[0,0,0,0,0,0] \rightarrow [0,0,1,1,1,1] \rightarrow [0,0,0,0,0,1] \).
While this shows how to enforce theoretical correctness of the
algorithm, in upcoming sections we will discuss the issue of practical
choice of the lengths in the set.

The experimental section discusses the practical matter of
efficiently sampling the SAW for systems with sparse connectivity, such
as the 2D and 3D Ising models.
 
\subsection{Iterated SARDONICS}
\label{sec:itersardonics}

The SARDONICS algorithm presented in Section \ref{sec:sardonics} is
sufficient to work effectively on many types of system, for example
Ising models with ferromagnetic interactions. This section, however,
will detail a more advanced strategy for proposing states that uses
the state-space SAW of Section \ref{sec:sardonics} as its basic move.
The overall idea is to extend the trial process so that the search for
a state to propose can continue from the state resulting from a
\emph{concatenated series} of SAWs from \( \xbold_{0} \). 
% The method
% relies on a judicious stochastic stopping rule that ensures that the
% search will continue and terminate for points that are likely to be
% rejected and accepted respectively; these aims must, of course be
% achieved while keeping the algorithm mathematically valid. 
The reader
familiar with combinatorial optimization heuristics will note the
philosophical resemblance to the class of algorithms termed
\emph{iterated local search} \cite{Hoos-04}, but will again bear in mind that
in the present work, we are interested in equilibrium sampling as
opposed to optimization.

We begin by noting that restriction to a \emph{single} SAW is unnecessary. We can
readily consider in principle an arbitrary concatenation of SAWs \( (
\sigmabold_{1}, \sigmabold_{2}, \ldots, \sigmabold_{N} ) \). A
straightforward extension of the SAW proposal is then to select some
number of iterations \( N \) (which need not be the same from one move
attempt to the next,) to generate \( \xbold_{1} \) from \( \xbold_{0}
\) by sampling from the concatenated proposal, defined to be
\begin{equation}
\begin{split}
\label{eq:elemISARDprop}
g(\xbold_{1}, \sigmabold_{1}, \ldots,
\sigmabold_{N} | \xbold_{0} ) \triangleq
f( \ybold^{1}, \sigma_{1} | \xbold_{0} ) f( \ybold^{2},
\sigma_{2} | \ybold^{1} ) \ldots \\ f( \ybold^{N},
\sigmabold_{N} | \ybold^{N-1} ) \delta_{\xbold_{1}}(\ybold^{N})
\end{split}
\end{equation}
and to accept the move with probability
\begin{equation}
\label{eq:ISARDMH}
\alpha( \xbold_{0}, \xbold_{1}, \sigmabold_{1} \ldots ,
\sigmabold_{N} ) = \min \bigg( 1, \frac{\pi(\xbold_{1})
  g( \xbold_{0}, R(\sigmabold_{N}), \ldots
  R(\sigmabold_{1}) | \xbold_{1} )}{\pi(\xbold_{0})
  g( \xbold_{1}, \sigmabold_{1}, \ldots, \sigmabold_{N} | \xbold_{0})} \bigg)
\end{equation}
In (\ref{eq:elemISARDprop}), the superscripted \( \{ \ybold^{i} \} \)
refer to the ``intermediate'' states on \( \Scal \) generated by the
flip sequences, and the functions \( f \) are the SAW proposals
discussed in Section \ref{sec:sardonics}. The proposed state \(
\xbold_{1} \) is identical to the final intermediate state \(
\ybold^{N} \); to avoid obscuring the notation with too many delta
functions we take it as implicit that the the proposal evaluates to
zero for intermediate states that do not follow from the flip
sequences \(\{\sigmabold_{i} \} \).

We refer to this as the \emph{iterated SARDONICS}
algorithm. Its potential merit over taking a single SAW is that it may
generate more distant states from \( \xbold_{0} \) that are
favorable. Unfortunately, a priori there is no guarantee that the
final state \( \xbold_{1} \) will be more worthy of acceptance than
the intermediate states visited in the process; it is computationally
wasteful to often propose long sequences of flips that end up
rejected, especially when potentially desirable states may have been
passed over. It is thus very important to choose the right value of $N$. For this reason, we 
will introduce Bayesian optimization techniques to adapt $N$ and other parameters
automatically in the following section.

\label{sec:Iterated}

% FIX!! THIS SHOULD BE A SUBSECTION OF SARDONICS NOT ITS OWN SECTION!!
\subsection{Mixture of SAWs}
\label{sec:mixsardonics}

A further addition we made to the basic algorithm was the
generalization of the proposal to a mixture of SAW processes.
Each segment of the iterated procedure introduced in Section
\ref{sec:Iterated} could in principle operate at a different level of
the biasing parameter \(\gamma\). A possible strategy one can envision
is to occasionally take a \emph{pair} of SAWs with the first at a
small value of \( \gamma \) and the next at a large value. The first
step encourages exploration away from the current state, and the
second a search for a new low-energy state. More specifically, for the
two SAWs \( (\sigmabold_{1}, \sigmabold_{2})\), we can have a proposal
of the form:
\[
f( \xbold_{1}, \sigmabold_{1}, \sigmabold_{2} | \xbold_{0} ) = f(
\ybold^{1}, \sigmabold_{1} | \xbold_{0}, \gamma_{L} ) f( \xbold_{1},
\sigmabold_{2} | \ybold^{1}, \gamma_{H} )
\]
where \( \gamma_{H}\) and \( \gamma_{L} \) are high and low
inverse temperature biases respectively. Unfortunately, such a method on its
own will likely result in a high rejection rate; the numerator of the
MH ratio enforcing detailed balance will be:
\[
f( \xbold_{0}, R(\sigmabold_{2}), R(\sigmabold_{1}) | \xbold_{1} ) = f(
\ybold^{1}, R(\sigmabold_{2}) | \xbold_{1}, \gamma_{L} ) f( \xbold_{0},
R(\sigmabold_{1}) | \ybold^{1}, \gamma_{H} )
\]
The probability of taking the reverse sequences will likely be very
small compared to those of the forward sequences. In particular, the
likelihood of descending at low-temperature biasing parameter along
the sequence \( R(\sigmabold_{1}) \), where \( \sigmabold_{1} \) was
generated with the high-temperature parameter, will be low.

Our simple approach to dealing with this is to define the proposal to
be a \emph{mixture} of three types of SAW processes.  The mixture
weights, \( P_{LL}, P_{HL}, P_{LH} \), with \( P_{LL} + P_{LH} +
P_{HL} =1 \), define, respectively, the frequencies of choosing a
proposal unit consisting of a \emph{pair} of SAWs sampled with \(
(\gamma_{L}, \gamma_{L}) \), \( (\gamma_{H}, \gamma_{L}) \), and \(
(\gamma_{L}, \gamma_{H}) \). The first proposal type encourages local
exploration; both SAWs are biased towards low-energy states. The
second one, as discussed, is desirable as it may help the sampler
escape from local minima. The last one may seem somewhat strange;
since it ends with sampling at \( \gamma_{H} \), it will tend to
generate states with high energy which will consequently be
rejected. The purpose of this proposal, however, is to assist in the
acceptance of the HL exploratory moves due to its presence in the
mixture proposal. Thus, \( P_{LH} \) is a parameter that must be
carefully tuned. If it is too large, it will generate too many moves
that will end up rejected due to their high energy; if too small, its
potential to help the HL moves be accepted will be diminished. The
mixture parameters are thus ideal candidates to explore the
effectiveness of adaptive strategies to tune MCMC.

\label{sec:Mixture}

\section{Adapting SARDORNICS with Bayesian Optimization}
\label{sec:bayesopt_mcmc}
% ADD THE OTHER PARAMETERS TO THIS LIST

SARDONICS has several free parameters: upper and lower bounds on the SAW length ($k_u$ and $k_l$ respectively), $\gamma_{H}$, $\gamma_{L}$, \( P_{LL}, P_{HL}, P_{LH} \) as explained in the previous section, and finally the number of concatenated SAWs $N$. We group these free parameters under the symbol $\vtheta = \{k_u, k_l, \gamma_{H},\gamma_{L},P_{LL},P_{HL},P_{LH}, N\}$. Each $\vtheta$ defines a stochastic policy, where the SAW length $k$ is chosen at random in the set $[k_l, k_u]$ and where the SAW processes are chosen according to the mixture probabilities \( P_{LL}, P_{HL}, P_{LH} \).
Tuning all these parameters by hand is an onerous task. Fortunately, this challenge can be surmounted using adaptation.
Stochastic approximation methods, at first sight, might appear to be good candidates for carrying out this adaptation. They have become increasingly popular in the subfield of adaptive MCMC \cite{Haario-01,Andrieu-01,Roberts-09,Vihola-10}. There are a few reasons, however, that force us to consider alternatives to stochastic approximation.

In our discrete domain, there are no obvious optimal acceptance rates that could be used to construct the objective function for adaptation. Instead, we choose to optimize $\vtheta$ so as to minimize the area under the auto-correlation function up to a specific lag. This objective was previously adopted
in \cite{Andrieu-01,Mahendran-11}. One might argue that is is a reasonable objective given that researchers and practitioners often use it to diagnose the convergence of MCMC algorithms.
However, the computation of gradient estimates for this objective is very involved and far from trivial \cite{Andrieu-01}. This motivates the introduction of a gradient-free optimization scheme known as Bayesian optimization \cite{Mockus-82,Brochu-09}. Bayesian optimization also has the advantage that it trades-off exploration and exploitation of the objective function. In contrast, gradient methods are designed to exploit locally and may, as a result, get trapped in unsatisfactory local optima. 

The proposed adaptive strategy consists of two phases: adaptation and sampling. In the adaptation phase Bayesian optimization is used to construct a randomized policy. In the sampling phase, a mixture of MCMC kernels selected according to the learned randomized policy is used to explore the target distribution. Experts in adaptive MCMC would have realized that there is no theoretical need for this two-phase procedure. Indeed, if the samplers are uniformly ergodic, which is the case in our discrete setting, and adaptation vanishes asymptotically, then ergodicity can still be established \cite{Roberts-09,Andrieu-01}. However, in our setting the complexity of the adaptation scheme increases with time. Specifically, Bayesian optimization, as we shall soon outline in detail, requires fitting a Gaussian process to $I$ points, where $I$ is the number of iterations of the adaptation procedure. In the worst case, this computation is $O(I^3)$. There are techniques based on conjugate gradients, fast multipole methods and low rank approximations to speed up this computation \cite{deFreitas-05,Halko-11}. However, none of these overcome the issue of increasing storage and computational needs. So, for pragmatic reasons, we restrict the number of adaptation steps. We will discuss the consequence of this choice in the experiments and come back to this issue in the concluding remarks.

The two phases of our adaptive strategy are discussed in more detail subsequently.

\subsection{Adaptation Phase}

Our objective function for adaptive MCMC is the area under the auto-correlation function up to a specific lag. This objective is intractable, but noisy
observations of its value can be obtained by running the Markov chain
for a few steps with a specific choice of parameters
$\vtheta_i$. Bayesian optimization can be used to
propose a new candidate $\vtheta_{i+1}$ by approximating the unknown
function using the entire history of noisy observations and a prior over this function.
 The prior distribution used in this paper
is a Gaussian process.

The noisy observations are used to obtain the predictive distribution
of the Gaussian process. An expected utility function derived in terms
of the sufficient statistics of the predictive distribution is
optimized to select the next parameter value $\vtheta_{i+1}$. The
overall procedure is shown in Algorithm \ref{alg:adaptive-mcmc}.  We
refer readers to \cite{Brochu-09} and \cite{Lizotte-08} for in-depth reviews
of Bayesian optimization.
\begin{algorithm}[t]
\caption{Adaptation phase of SARDONICS}
\label{alg:adaptive-mcmc}
\begin{algorithmic}[1]
{\footnotesize
   \FOR{$i=1,2,\dots, I $ }
      \STATE Run SARDONICS for $L$ steps with parameters $\vtheta_{i}$.
       \STATE Use the drawn samples to obtain a noisy evaluation of the objective function: $z_{i}=h(\vtheta_{i}) + \epsilon$.
       \STATE Augment the observation set $\mathcal{D}_{1:i} = \{\mathcal{D}_{1:i-1}, (\vtheta_{i}, z_{i})\}$.
       \STATE Update the GP's sufficient statistics.
       \STATE Find $\vtheta_{i+1}$ by optimizing an acquisition function:
       $\vtheta_{i+1} = \arg\max_{\vtheta} u(\vtheta |\mathcal{D}_{1:i})$.
  \ENDFOR
}
\end{algorithmic}
\end{algorithm}

The unknown objective function $h(\cdot)$ is assumed to be distributed
according to a Gaussian process with mean function $m(\cdot)$ and
covariance function $k(\cdot, \cdot)$:
\begin{align*}
h(\cdot) &\sim GP(m(\cdot), k(\cdot, \cdot)).
\end{align*}
We adopt a zero mean function $m(\cdot) = 0$ and an anisotropic
Gaussian covariance that is essentially the popular automatic relevance determination (ARD) kernel
\cite{Rasmussen-06}:
\begin{align*}
k(\vtheta_j, \vtheta_k) &=
\exp
\left(
-\frac{1}{2}
(\vtheta_j - \vtheta_k)^T
\textrm{diag}(\psi)^{-2}
(\vtheta_j - \vtheta_k)
\right)
\end{align*}
where $\psi \in \mathbb{R}^d$ is a vector of hyper-parameters. The Gaussian process is a surrogate model for the true objective, which involves intractable expectations with respect to the invariant distribution and the MCMC transition kernels.

We assume that the noise in the measurements is Gaussian: $z_i = h(\vtheta_i) + \epsilon$, $\epsilon \sim \mathcal{N}(0, \sigma^2_{\eta})$. It is possible to adopt other noise models \cite{Diggle-98}. Our Gaussian process emulator has hyper-parameters $\psi$ and $\sigma_{\eta}$. These hyper-parameters are typically computed by maximizing the likelihood \cite{Rasmussen-06}. In Bayesian optimization, 
 we can use Latin hypercube designs to select an initial set of parameters and then proceed to maximize the likelihood of the hyper-parameters iteratively \cite{Ye-98,Santner-03}. This is the approach followed in our experiments. However, a good alternative is to use either classical or Bayesian quadrature to integrate out the hyper-parameters \cite{Osborne-10,Rue-09}.

Let $\mathbf{z}_{1:i}\sim \mathcal{N}(0,\mathbf{K})$ be the $i$ noisy
observations of the objective function obtained from previous
iterations. 
(Note that the Markov chain is run for $L$ steps for each
discrete iteration $i$. The extra index to indicate this fact has been
made implicit to improve readability.) $\mathbf{z}_{1:i}$ and $h_{i+1}$
are jointly multivariate Gaussian:
\begin{align*}
\begin{bmatrix}
\mathbf{z}_{1:i} \\
h_{i+1}
\end{bmatrix}
&=
\mathcal{N}
\left(
\mathbf{0},
\begin{bmatrix}
\mathbf{K}+\sigma^2_{\eta}I & \mathbf{k}^T \\
\mathbf{k} & k(\vtheta, \vtheta)
\end{bmatrix}
\right),
\end{align*}
where
\begin{align*}
\mathbf{K} &=
\begin{bmatrix}
k(\vtheta_1, \vtheta_1) & \ldots & k(\vtheta_1, \vtheta_i) \\
\vdots & \ddots & \vdots \\
k(\vtheta_i, \vtheta_1) & \ldots & k(\vtheta_i, \vtheta_i)
\end{bmatrix}
\end{align*}
and $\mathbf{k} =
[
k(\vtheta, \vtheta_1) \; \ldots \;
k(\vtheta, \vtheta_i)]^T.$ 
All the above assumptions about the form of the prior distribution and
observation model are standard and less restrictive than they might
appear at first sight. The central assumption is that the objective function is
smooth. For objective functions with discontinuities, we need more sophisticated surrogate functions for the cost. We refer readers to \cite{Gramacy-04} and \cite{Brochu-09} for examples.

The predictive distribution for any value $\vtheta$ follows from the
Sherman-Morrison-Woodbury formula, where $\mathcal{D}_{1:i} =
(\vtheta_{1:i},\mathbf{z}_{1:i})$:
\begin{align*}
p(h_{i+1} | \mathcal{D}_{1:i},\vtheta) &=
\mathcal{N}
(\mu_i(\vtheta), \sigma^2_i(\vtheta)) \\
\mu_i(\vtheta) &=
\mathbf{k}^{T}(\mathbf{K + \sigma_{\eta}^2 I})^{-1}\mathbf{z}_{1:i} \\
\sigma^2_i(\vtheta) &=
k(\vtheta, \vtheta) -
\mathbf{k}^{T}(\mathbf{K + \sigma_{\eta}^2 I})^{-1}\mathbf{k}
\end{align*}

The next query point $\vtheta_{i+1}$ is chosen to maximize an
acquisition function, $u(\vtheta |\mathcal{D}_{1:i})$, that trades-off exploration (where
$\sigma^2_i(\vtheta)$ is large) and exploitation (where $\mu_i(\vtheta)$
is high). We adopt the expected improvement over the best candidate as
this acquisition function \cite{Mockus-82,Schonlau-98,Brochu-09}. This is a standard acquisition function for which asymptotic rates of convergence have been proved \cite{Bull-11}. However, we point out that there are a few other reasonable alternatives, such as Thompson sampling \cite{May-11} and upper confidence bounds (UCB) on regret \cite{Srinivas-10}. A comparison among these options as well as portfolio strategies to combine them appeared recently in \cite{Hoffman-11}. There are several good ways of optimizing the acquisition function, including the method of DIvided RECTangles (DIRECT) of \cite{Finkel-03} and many versions of the projected Newton methods of \cite{Bertsekas-87}. We found DIRECT to provide a very efficient solution in our domain. Note that optimizing the acquisition function is much easier than optimizing the original objective function. This is because the acquisition functions can be easily evaluated and differentiated.

\subsection{Sampling Phase}
\label{sec:sampling_phase}

The Bayesian optimization phase results in a Gaussian process on the $I$
noisy observations of the performance criterion $\mathbf{z}_{1:I}$,
taken at the corresponding locations in parameter space $\vtheta_{1:I}$.
This Gaussian process is
used to construct a discrete stochastic policy $p(\vtheta |
\mathbf{z}_{1:I})$ over the parameter space $\mbs{\Theta}$.
The Markov chain is run with
parameter settings randomly drawn from this policy at each step.

One can synthesize the policy $p(\vtheta|\mathbf{z}_{1:I})$ in several ways. The simplest is to use the mean of the GP to construct a distribution proportional to $\exp(\mu(\vtheta))$. This is the so-called Boltzmann policy.
We can sample $M$ parameter candidates $\vtheta_i$ according to this distribution. Our final sampler then consists of a mixture of $M$ transition kernels, where each kernel is parameterized by one of the $\vtheta_i$, $i=1,\ldots,M$. The distribution of the
samples generated in the sampling phase will approach the target
distribution $\pi(\cdot)$ as the number of iterations tends to
$\infty$ provided the kernels in this finite mixture are ergodic.

In high dimensions, one reasonable approach would be to use a multi-start optimizer to find maxima of the unnormalized Boltzmann policy and then perform local exploration of the modes with a simple Metropolis algorithm. This is a slight more sophisticated version of what is often referred to as the epsilon greedy policy.

The strategies discussed thus far do not take into account the uncertainty of the GP. A solution is to draw M functions according to the GP and then find he optimizer $\vtheta_i$ of each of these functions. This is the strategy followed in  \cite{May-11} for the case of contextual bandits. Although this strategy works well for low dimensions, it is not clear how it can be easily scaled.

\label{sec:Adaptation}

\section{Experiments}

Our experiments compare
SARDONICS to the popular Gibbs, block-Gibbs and Swendsen-Wang samplers. Several types
of binary-valued systems, all belonging to the general class of
undirected graphical model called the \emph{Ising model}, are used.
The energy of a binary state \( \sbold \), where \( s_{i} \in \{-1,1\}
\) is given by: \begin{equation}
  \label{eq:IsingHamiltonian}
  E(\sbold) = - \sum_{(i,j)} J_{ij} s_{i} s_{j} - \sum_{i}
  h_{i} s_{i} \nonumber
\end{equation}
(One can trivially map \( x_{i} \in \{0,1\}
\) to \( s_{i} \in \{-1,1\}\) and vice-versa.)
The interaction weights \( J_{ij} \) between variables \( i \)
and \( j \) are zero if they are topologically disconnected; positive
(also called ``ferromagnetic'') if they tend to have the same value;
and negative (``anti-ferromagnetic'') if they tend to have opposite
values. The presence of interactions of mixed sign can significantly
complicate Monte Carlo simulation due to the proliferation of local
minimas in the energy landscape. Interaction weights of different sign produce unsatisfiable constraints and cause the system to become ``frustrated''. 

\begin{table}[h]

\begin{center}
\begin{tabular}{lrrrrrrrr}

& \multicolumn{8}{c}{\textbf{Parameters of SARDONICS} ($\vtheta$)} \\
\cline{2-9} \\
&  $k_l$ & $k_u$ & $\gamma_{L}$ & $\gamma_{H}$ & $P_{LL}$ & $P_{HL}$ & $P_{LH}$ & $N$\\
\hline \\
\textbf{Range} & $\{1,\ldots,70\}$ & $\{2,\ldots,120\}$ & $[0.89,1.05]$ & $[0.9,1.15]$ & $[0,1]$& $[0,1]$& $[0,1]$ &$\{1,\ldots,5\}$
\end{tabular}
\caption{The ranges from which Bayesian Optimization chooses parameters for SARDONICS. The selection mechanism ensures that $k_u\geq k_l$ and $\gamma_H \geq \gamma_L$.}
\end{center}

\label{table:param_range}
\end{table}

The first set of experiments considers the behavior of Gibbs, SARDONICS and Swendsen-Wang on a ferromagnetic Ising model on a planar, regular grid of
size $60\times 60$. The model has connections between the nodes on
one boundary to the nodes on the other boundary for each dimension. As a result of these periodic boundaries, the model is a  square toroidal grid. Hence, each node has exactly four neighbors. In the first experiment, the interaction
weights, $J_{ij}$, are all 1 and the biases, $h_i$, are all 0. We test the three algorithms on this model at three different temperatures: $1$, $2.27$ and $5$.
 The value \( \beta = 1/2.27 \) corresponds to the so-called \emph{critical
  temperature}, where many interesting phenomena arise
\cite{Newman-99} but where simulation also becomes quite
difficult. 

% NOTE: parameter range!!!

\begin{figure}[t!]
\begin{tabular*}{10 cm}{c c}
  \raisebox{0.01cm}{\includegraphics[trim = 35mm 15mm 35mm 35mm, clip, width=0.4\textwidth]{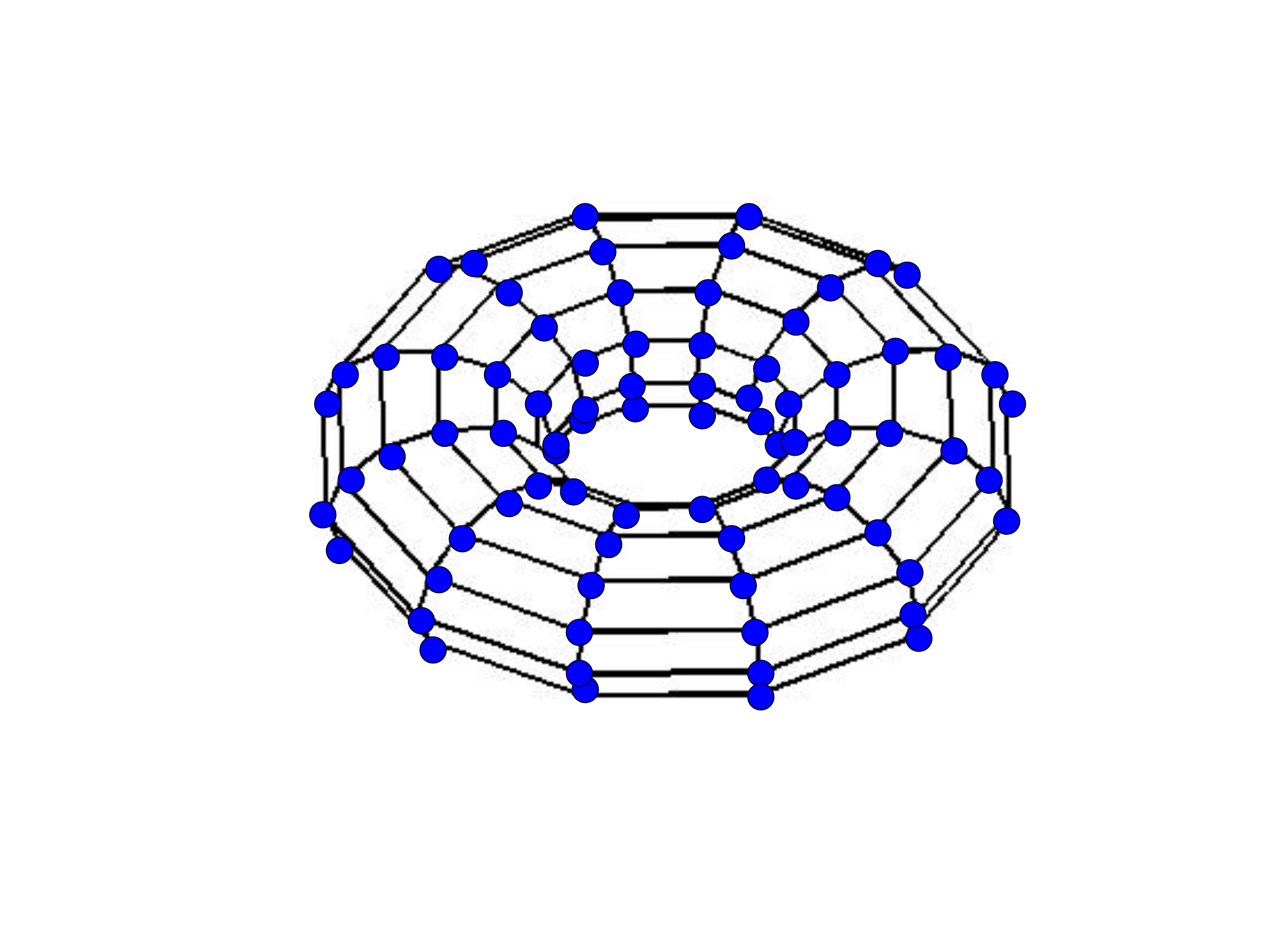}} &
  \includegraphics[width=0.5\textwidth]{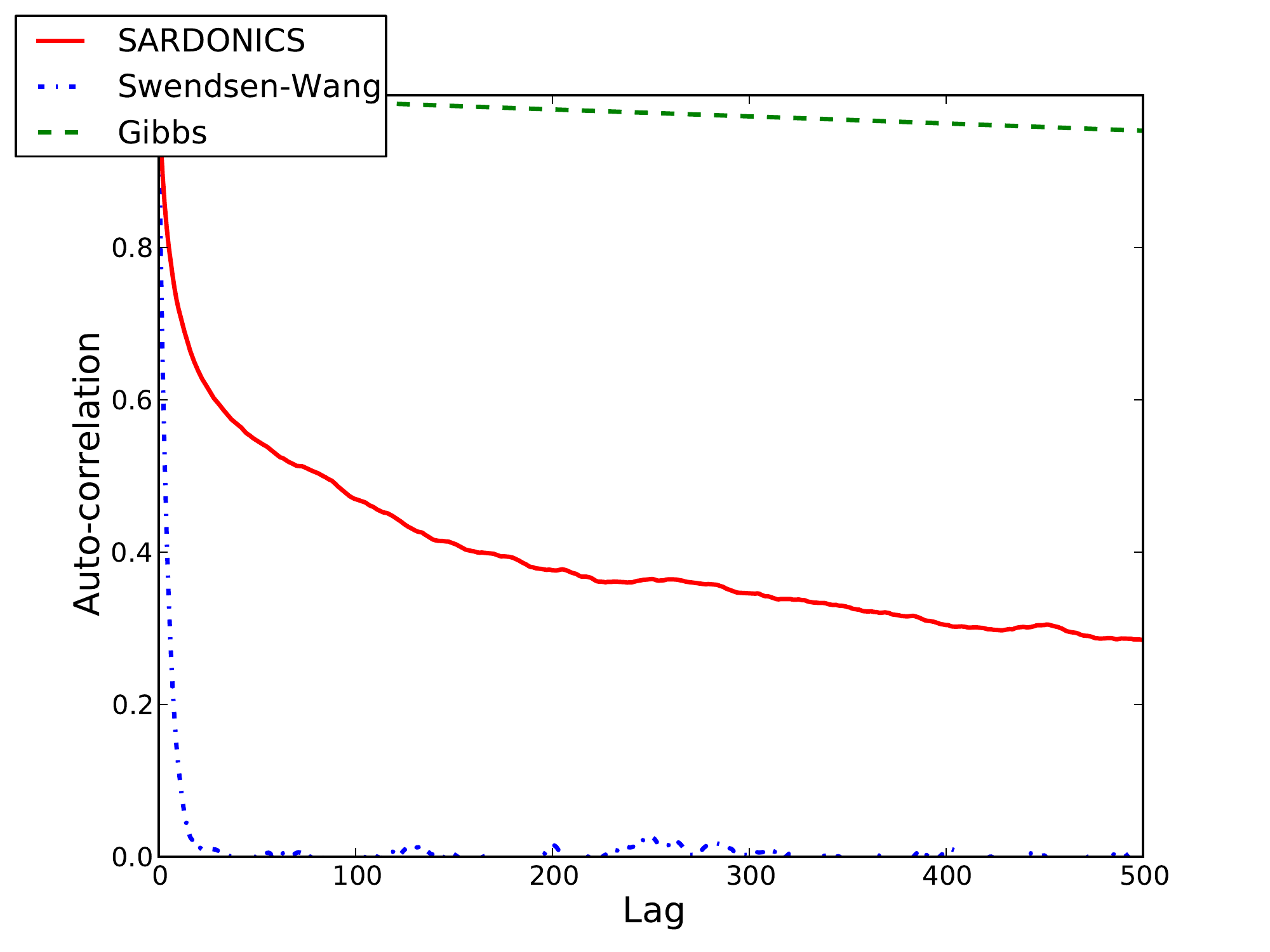} \\
    \includegraphics[width=0.45\textwidth]{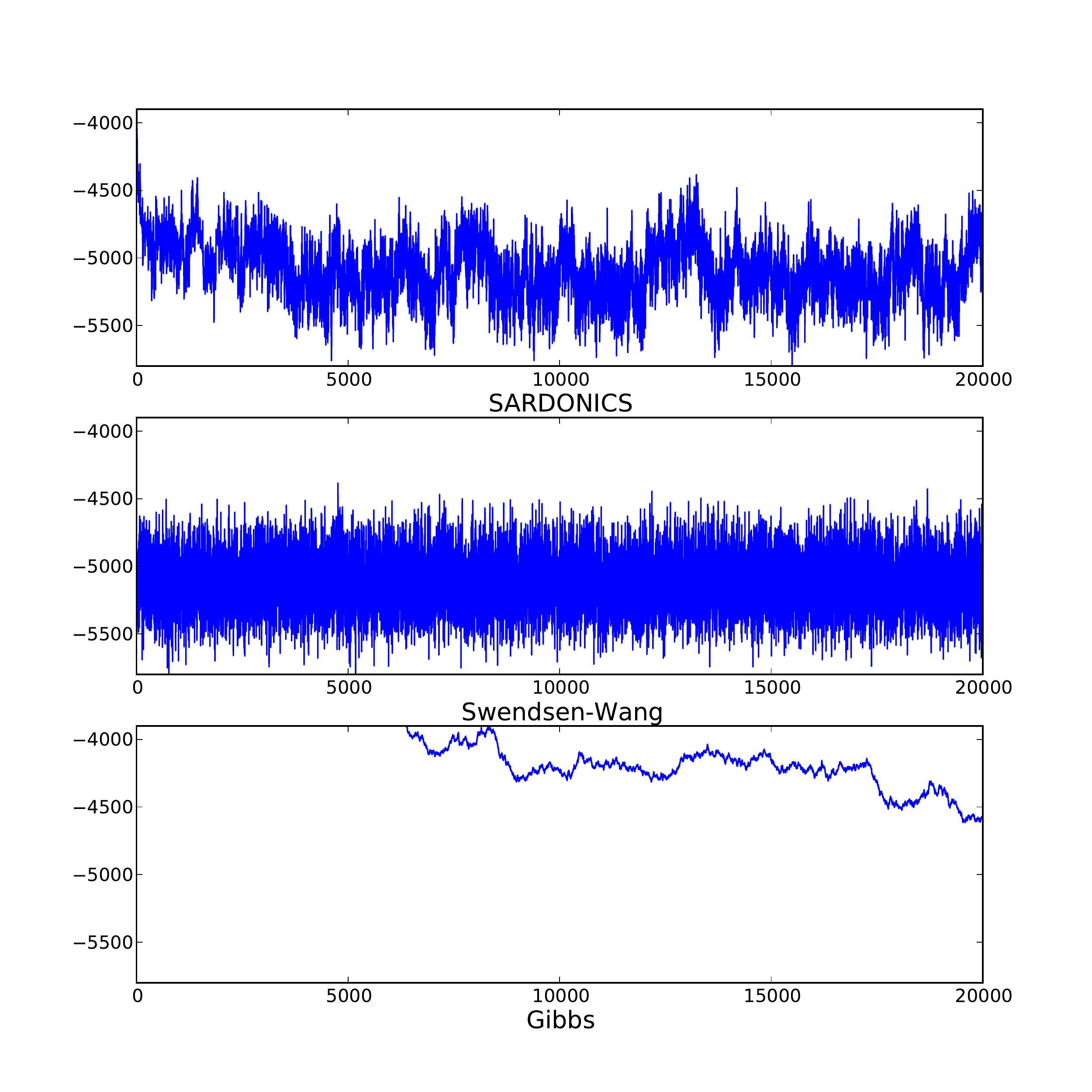} &
  \includegraphics[width=0.55\textwidth]{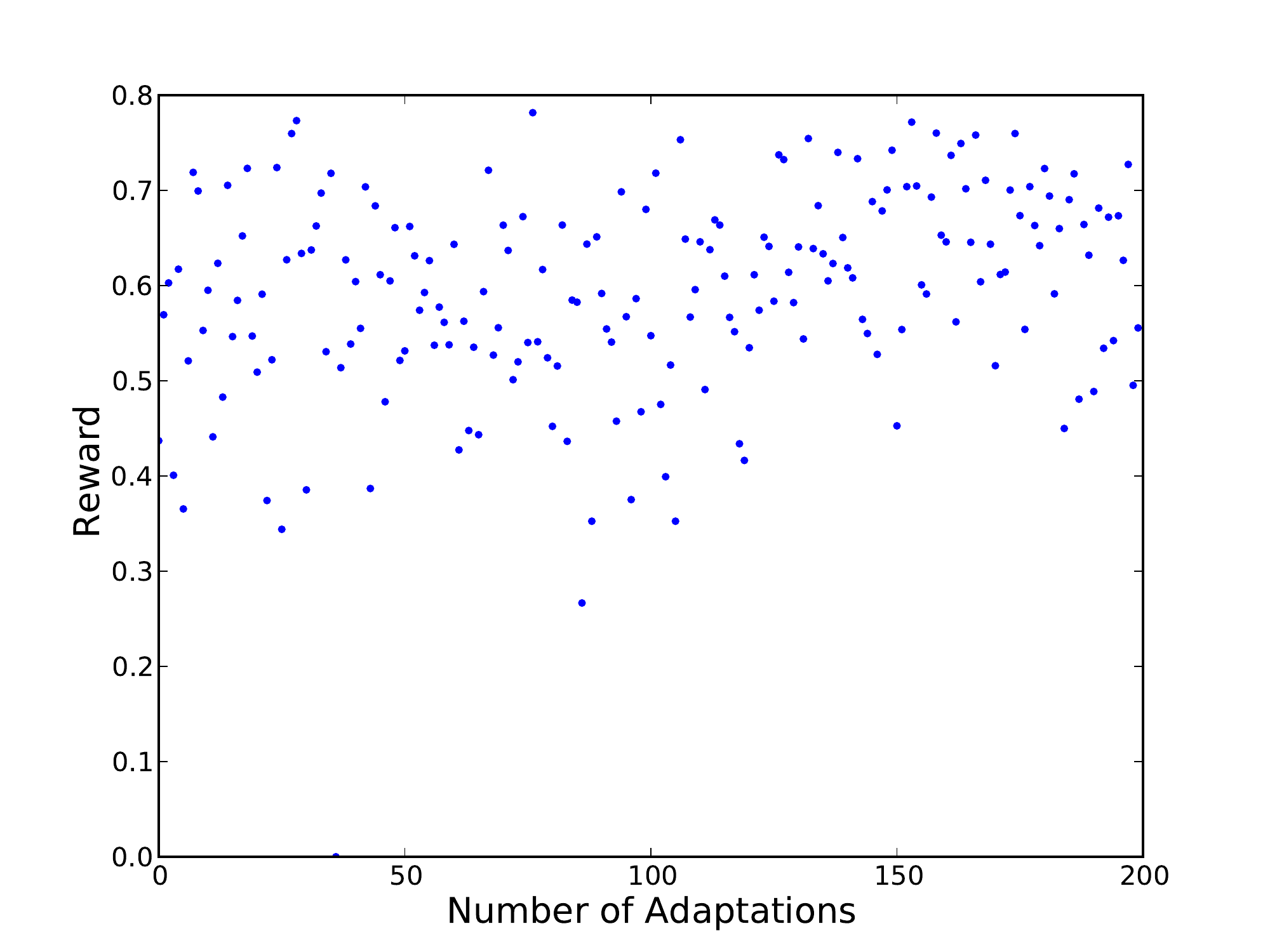}
\end{tabular*}
  \caption{The 2D Ferromagnetic model with periodic (toroidal) boundaries [top left], the auto-correlations of the samplers for $T=2.27$ (critical temperature) [top right], traces of every 5 out of the $10^5$ samples of the energy [bottom left] and rewards obtained by the Bayesian optimization algorithm during the adaptation phase [bottom right]. }
% For low temperatures, the distribution is very peaked and, in this easy domain with positive interaction weights, Swendsen-Wang performs best. As the temperature increases, we observe that SARDONICS performs best.
  \label{fig:acf1}
\end{figure}

The experimental protocol for this and subsequent models was the same: For
10 independent trials, run the competing samplers for a certain number
of iterations, storing the sequence of energies visited. Using each
trial's energy sequence, compute the \emph{auto-correlation function}
(ACF). Comparison of the algorithms consisted of analyzing the
energy ACF averaged over the trials. Without going into detail, a more
rapidly decreasing ACF is indicative of a faster-mixing Markov chain; see for example
\cite{Robert-04}. 
For all the models, each sampler is run for $10^5$ steps. For SARDONICS,
we use the first $2 \times 10^4$ iterations to adapt its hyper-parameters. For fairness
of comparison, we discard the first $2 \times 10^4$ samples from each sampler and compute
the ACF on the remaining $8 \times 10^4$ samples. 
For all our experiments, we use the ranges of parameters summarized in Table I to adapt SARDONICS.

\begin{figure}[t]
\begin{tabular*}{10 cm}{c c c}
    \includegraphics[width=0.5\textwidth]{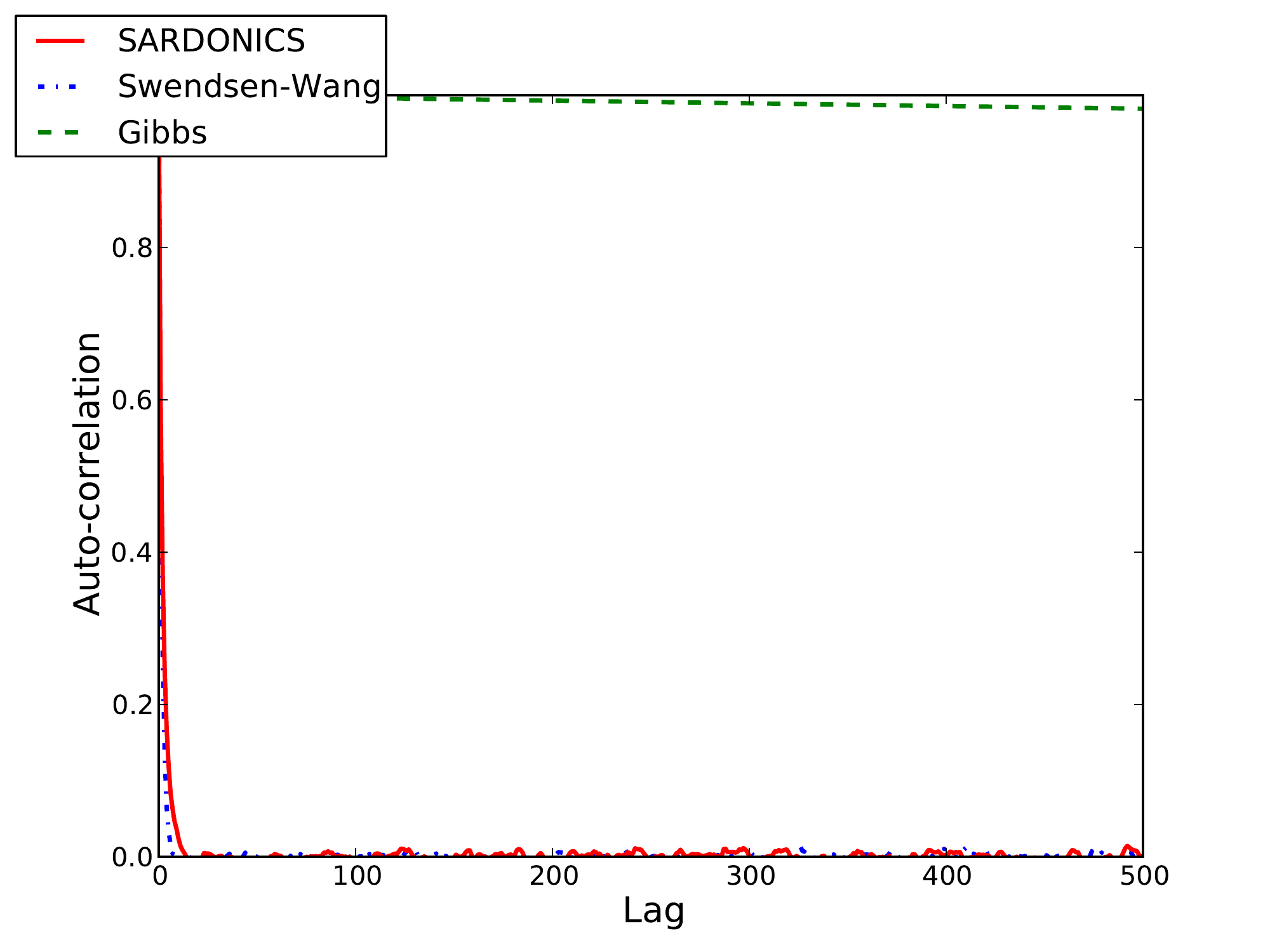} &
  \includegraphics[width=0.5\textwidth]{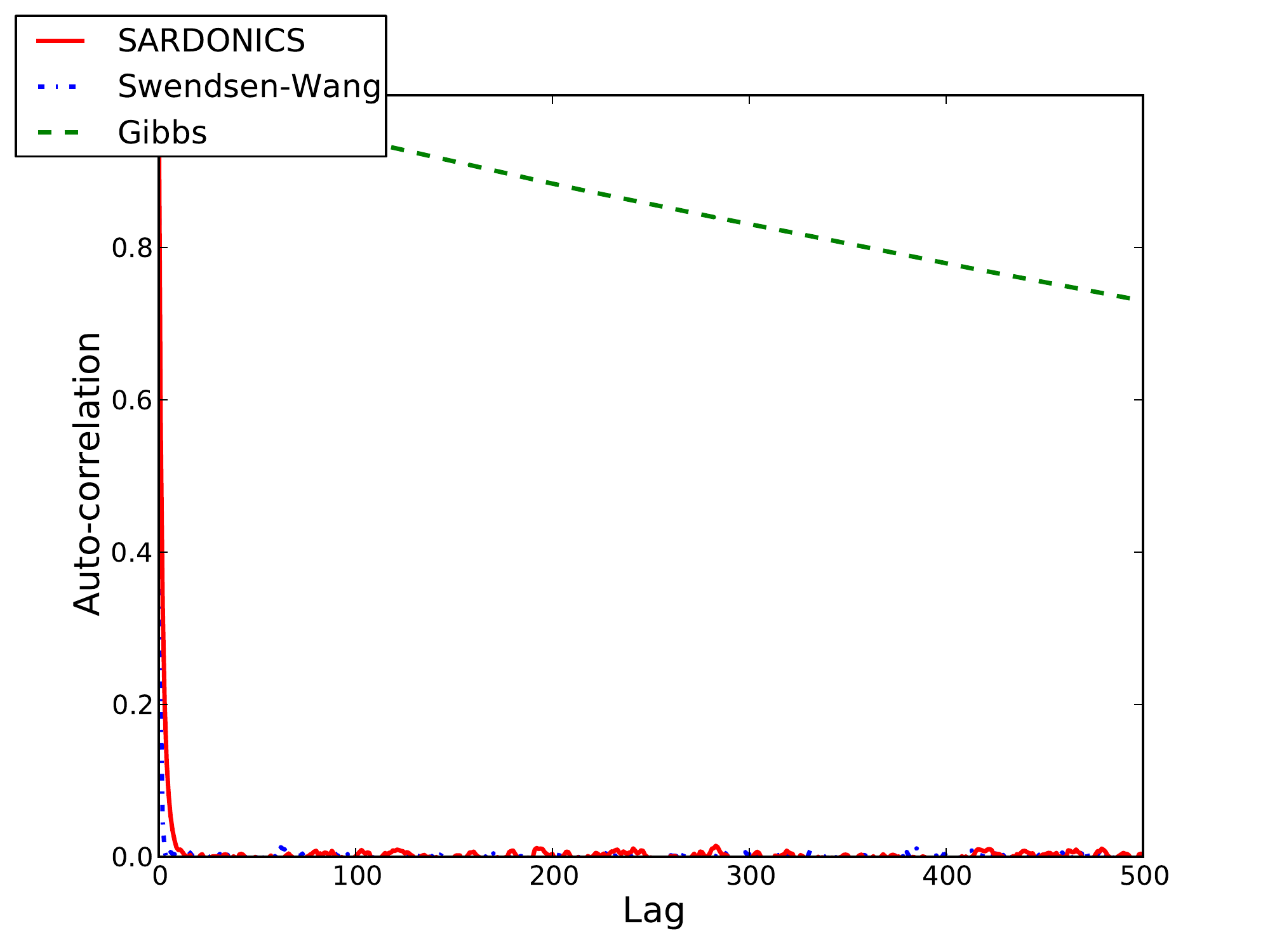} \\
\includegraphics[width=0.5\textwidth]{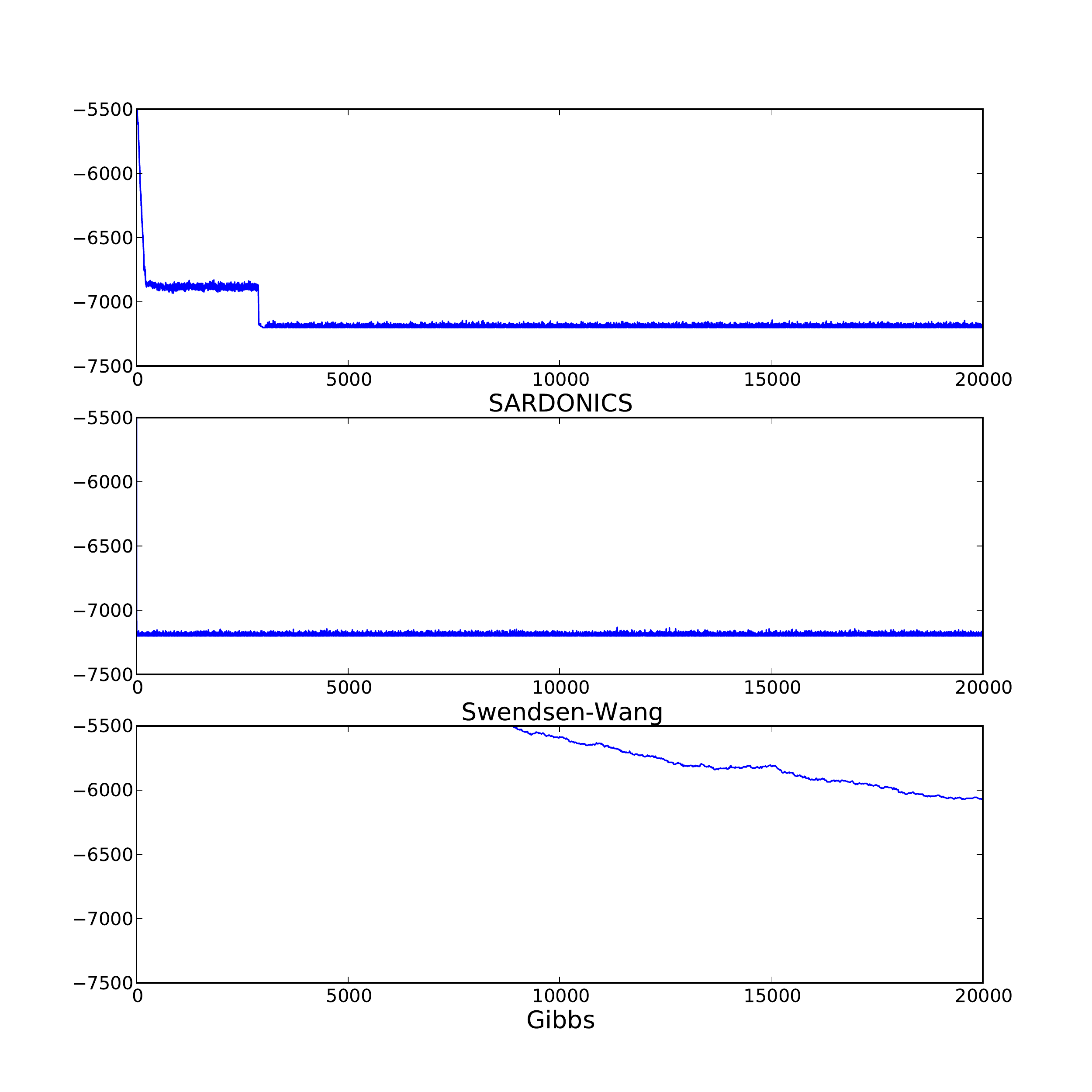} &
  \includegraphics[width=0.5\textwidth]{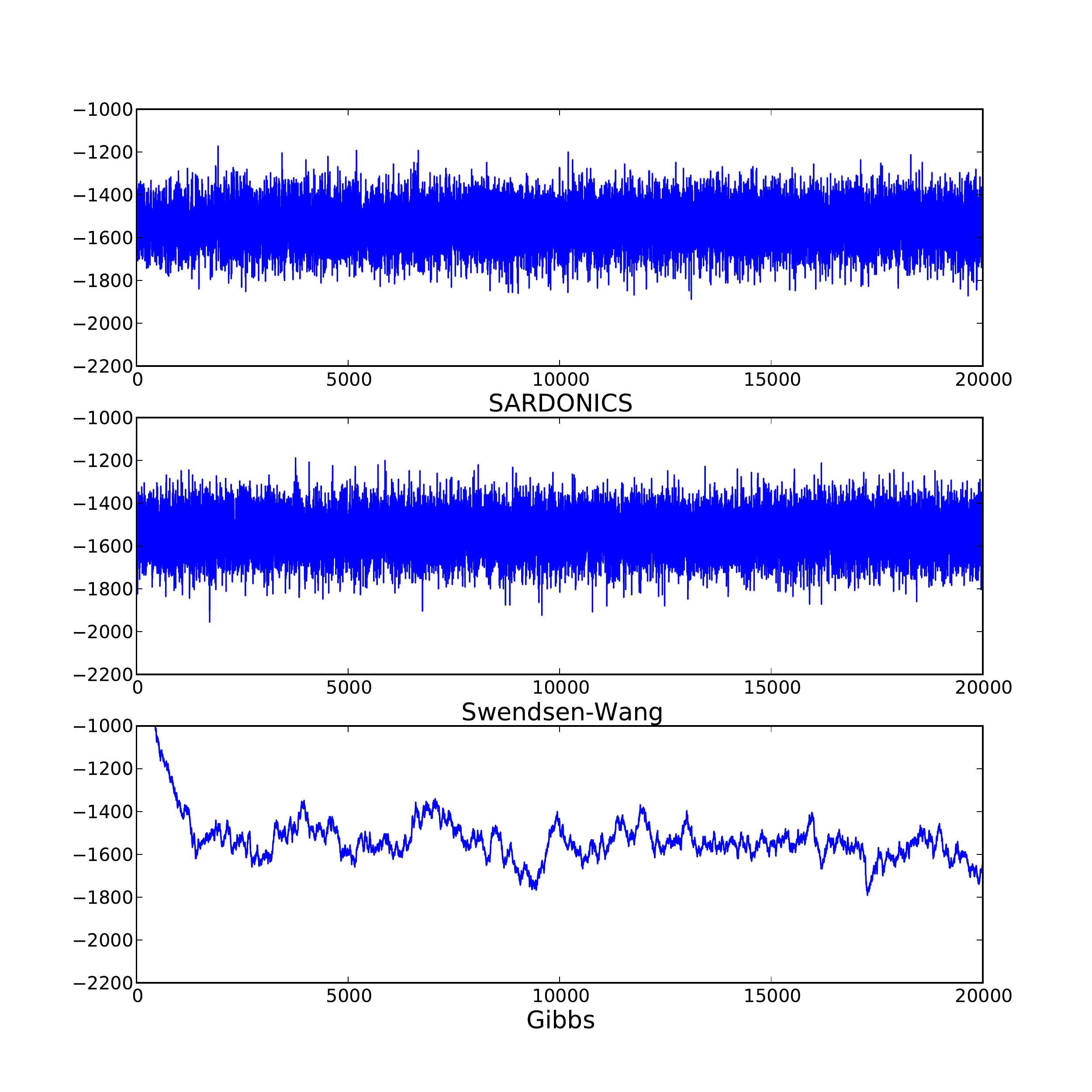} \\  
    \end{tabular*}
    \caption{Auto-correlations of the samplers, on the Ferromagnetic 2D grid Ising model, for $T=1$ [top left] and $T = 5$ [top right].
    Traces of every 5 out of the $10^5$ samples of the energy at $T=1$ [bottom left] and $T=5$ [bottom right].}
%   \caption{Trace of the energy for the Ferromagnetic 2D grid Ising model at temperatures $T=1$ [left] and $T=5$ [right]. For each sampler we show every 5 of the $10^5$ samples. At temperature $T=2.27$, Swendsen-Wang mixes better than SARDONICS and Gibbs. In particular, note that SARDONICS seems to .... For temperature $T=5$, SARDONICS mixes more efficiently.}
  \label{fig:trace_temp} 
\end{figure}

% EXPLAIN WHAT HAPPENS AS TEMPERATURE VARIES FOR THE FERROMAGNET. 
As shown in Figure~\ref{fig:acf1}, at the critical temperature, Swendsen-Wang does 
considerably better than its competitors. It is precisely for these types of lattice that Swendsen-Wang was designed
to mix efficiently, through effective clustering. This part of the result is therefore not surprising. However, we must consider the performance of SARDONICS carefully. Although it does much better than Gibbs, as expected, it under-performs in comparison to Swendsen-Wang. This seems to be a consequence of the fact that the probability distribution at this temperature has many peaks. SARDONICS, despite its large moves, can get trapped in these peaks for many iterations. At temperature 5, when the distribution is flattened,  
the performance of SARDONICS is comparable to that of Swendsen-Wang as depicted in Figure~\ref{fig:trace_temp}.
The same figure also shows the results for $T=1$, where the target distribution is even more peaked. The good performance of SARDONICS for $T=1$ might seem counterintuitive considering the previous results for $T=2.27$. 
An explanation is provided subsequently.

% EXPLAIN WHY ADAPTATION IS HARD FOR LOW TEMPERATURE.
At temperatures $1$ and $2.27$, adaptation is very hard. Before the sampler converges, it is beneficial for it to take large steps to achieve lower auto-correlation. The adaptation mechanism learns this and hence automatically chooses large SAW lengths. But after the sampler converges, large steps can take the sampler out of the high-probability region thus leading to low acceptance. If the sampler hits the peak during adaptation then it learns to choose small SAW lengths. This may also be problematic if we restart the sampler from a different state. This points out one of the dangers of having finite adaptation schemes. A simple solution, in this particular case, is to change the bounds on the SAW lengths manually. This enables SARDONICS to achieve performance comparable to that of Swendsen Wang for 
these nearly deterministic models, as shown in Figure~\ref{fig:trace_temp} for $T=1$. However, ideally, infinite adaptation mechanisms might provide a more principled and general solution.

In addition to studying the effect of temperature changes on the performance of the algorithms, we also investigate their sensitivity with respect to the addition of unsatisfiable constraints. To accomplish this, we set the 
interaction weights $J_{ij}$ and the biases $h_i$ uniformly at random on the set $\{-1, 1\}$. We set the temperature to $T=1.0$. We refer to this model as the frustrated 2D grid Ising model. As shown by the auto-correlations and energy traces plotted in Figure~\ref{fig:2d}, SARDONICS does considerably better than its rivals. It is interesting to note that Swendsen-Wang does much worse on this model as the unsatisfiable constraints hinder effective clustering. The figure also shows the reward obtained by the Bayesian optimization scheme as a function of the number of adaptations. The adaptation algorithm traded-off exploration and exploitation effectively in this case.
\begin{figure}[t]
\begin{tabular*}{10 cm}{c c}
  \raisebox{0.1cm}{\includegraphics[trim = 35mm 15mm 35mm 35mm, clip, width=0.4\textwidth]{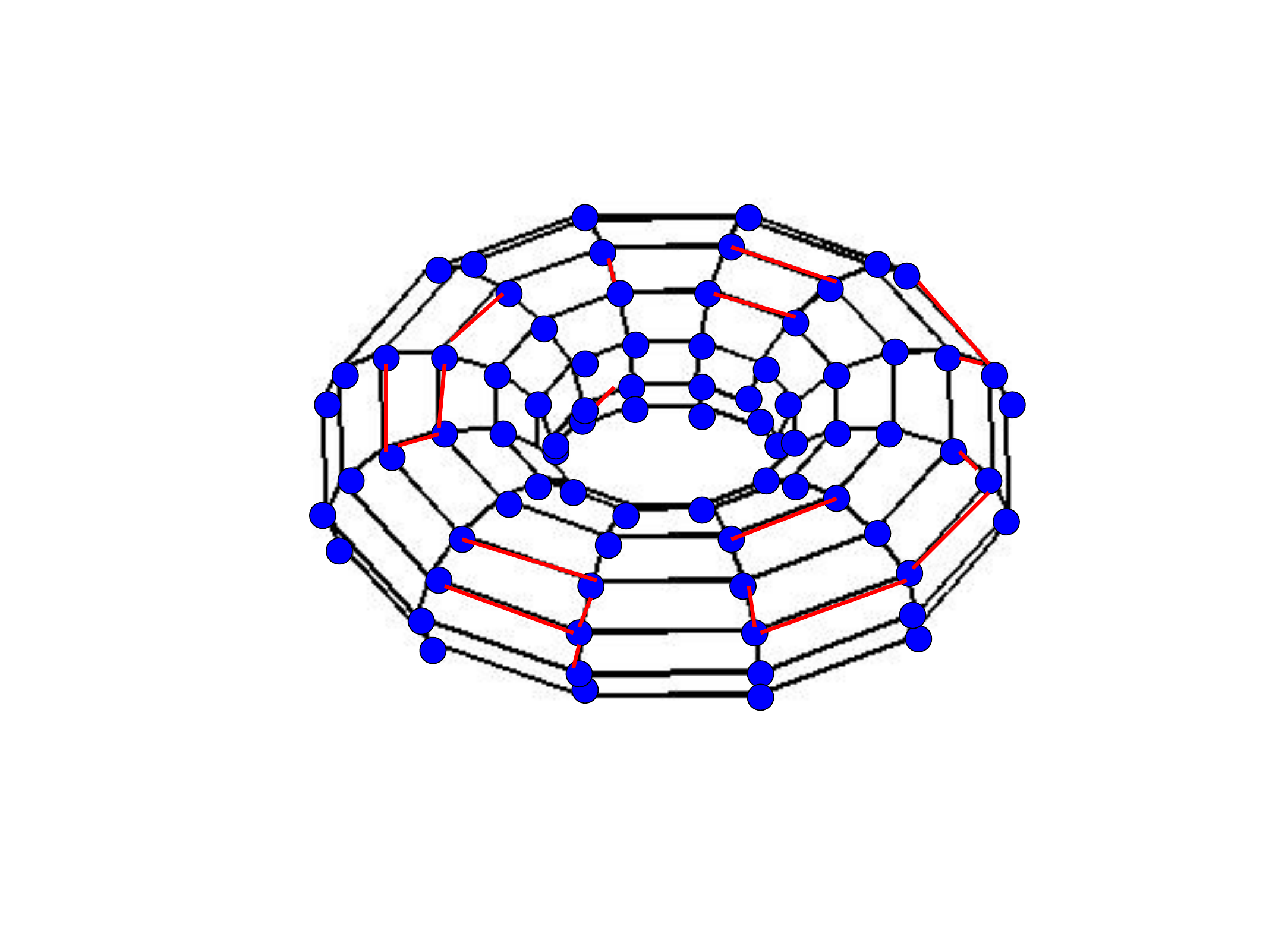}} &
  \includegraphics[width=0.55\textwidth]{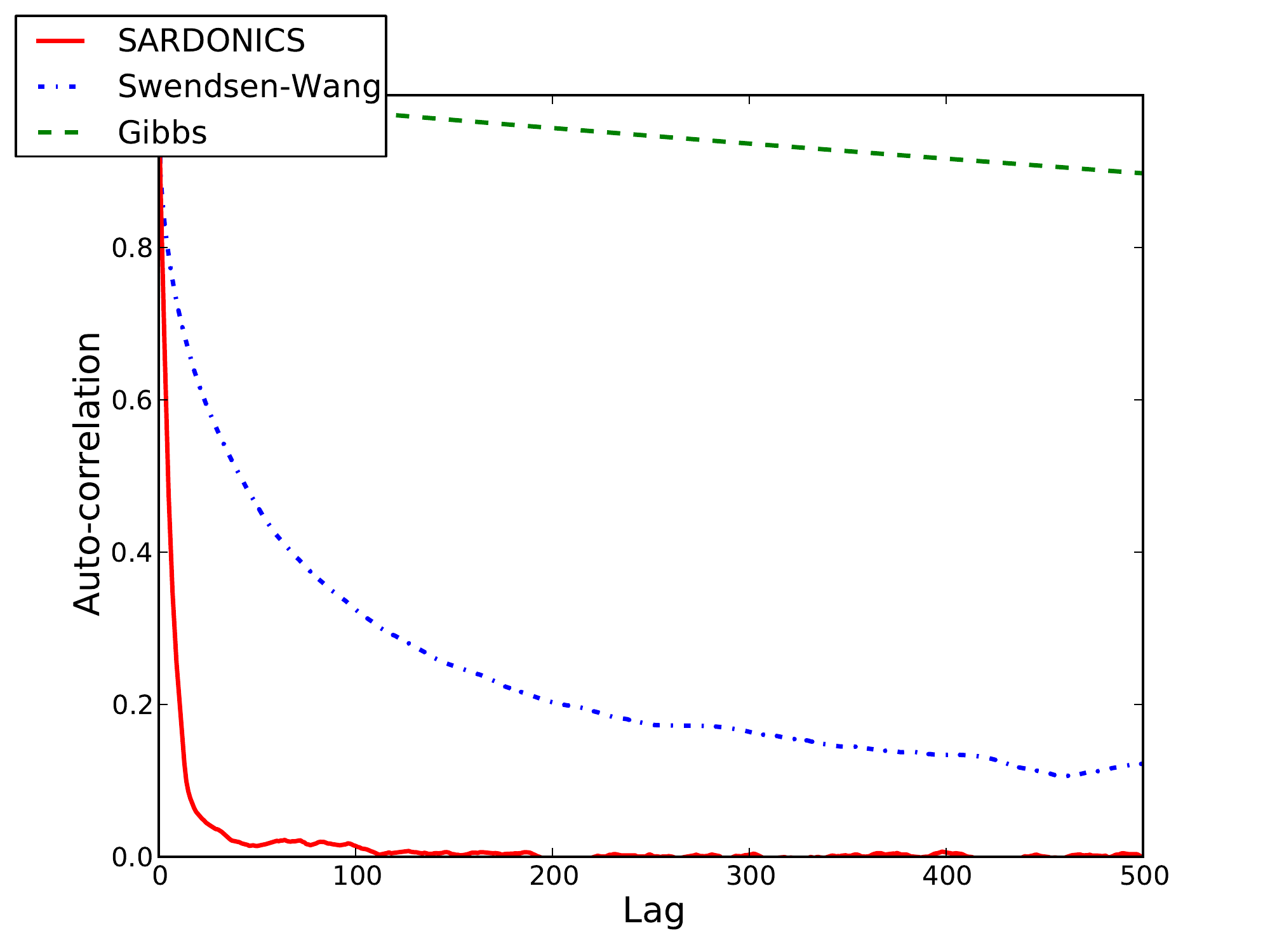} \\
    \includegraphics[width=0.45\textwidth]{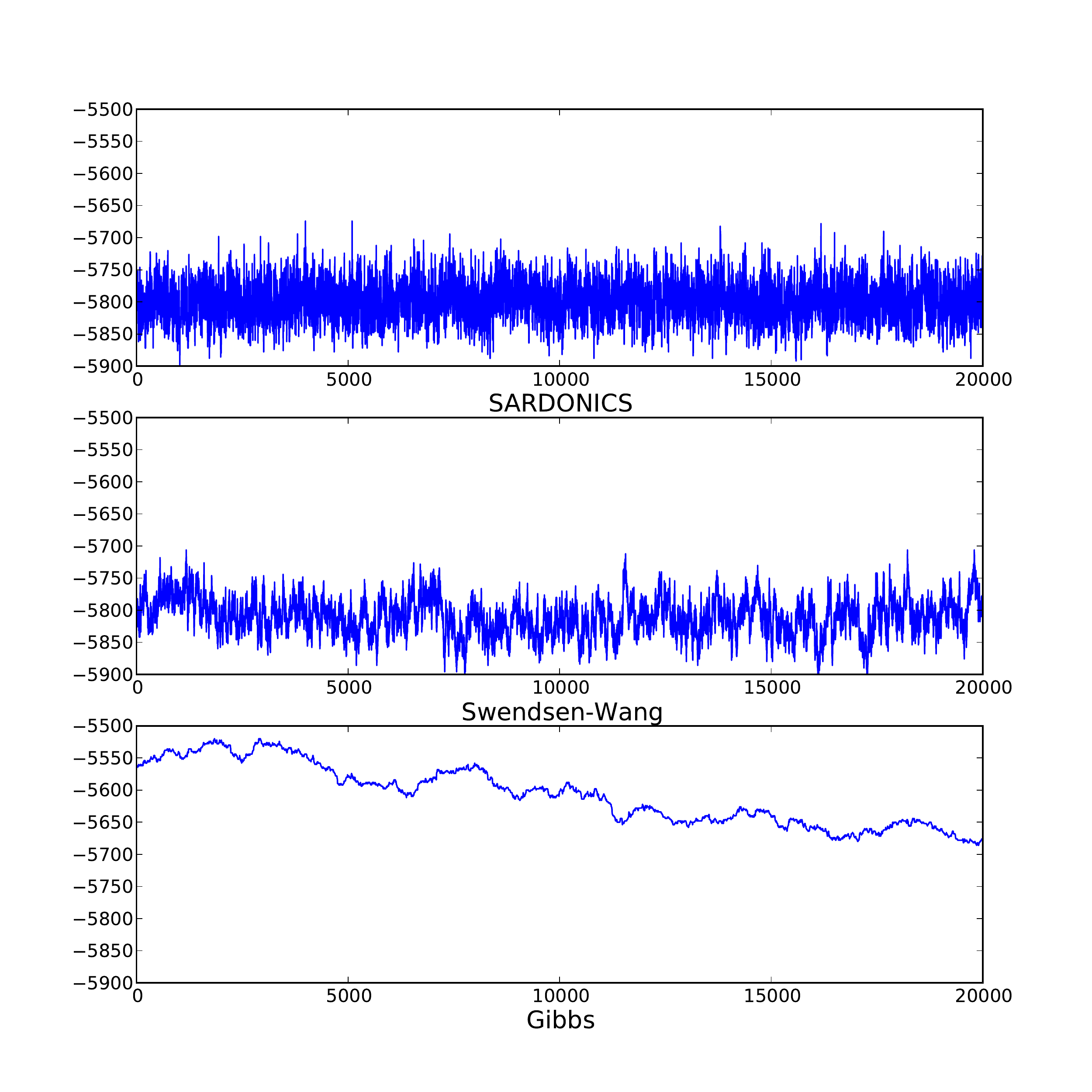} &
  \includegraphics[width=0.55\textwidth]{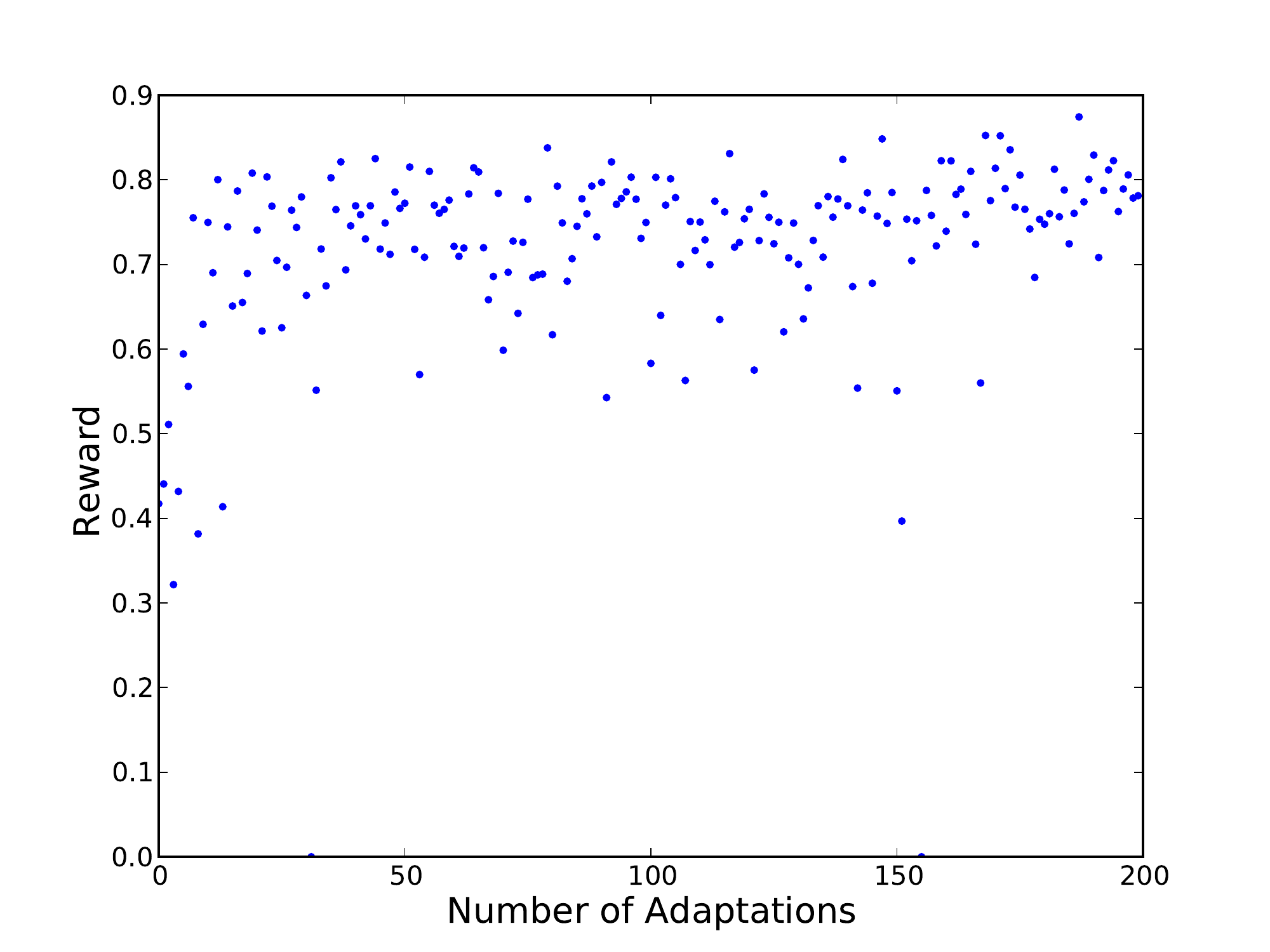}
\end{tabular*}
  \caption{Frustrated 2D grid Ising model with periodic boundaries [top left], auto-correlations of the three samplers [top right], traces of the last 20000 of 100000 samples of the energy [bottom left] and rewards obtained by the Bayesian optimization algorithm during the adaptation phase [bottom right].}
  \label{fig:2d}
\end{figure}

The third batch of experiments compares the algorithms on an
Ising model where the variables are topologically structured as a
$9\times 9 \times 9$ three-dimensional cube, \( J_{ij} \) are uniformly sampled from
the set \( \{-1,1\} \), and the \( h_{i} \) are zero. \( \beta \) was
set to \( 1.0 \), corresponding to a lower temperature than the value
of \( 0.9 \), at which it is known \cite{Marinari-97} that, roughly
speaking, regions of the state space become very difficult to visit
from one another via traditional Monte Carlo simulation. Figure~\ref{fig:3d} shows that for this more densely connected model, the performance of Swendsen-Wang deteriorates substantially. However, SARDONICS still mixes reasonably well.
\begin{figure}[t]
\begin{tabular*}{10 cm}{c c}
  \raisebox{0.5cm}{\includegraphics[width=0.3\textwidth]{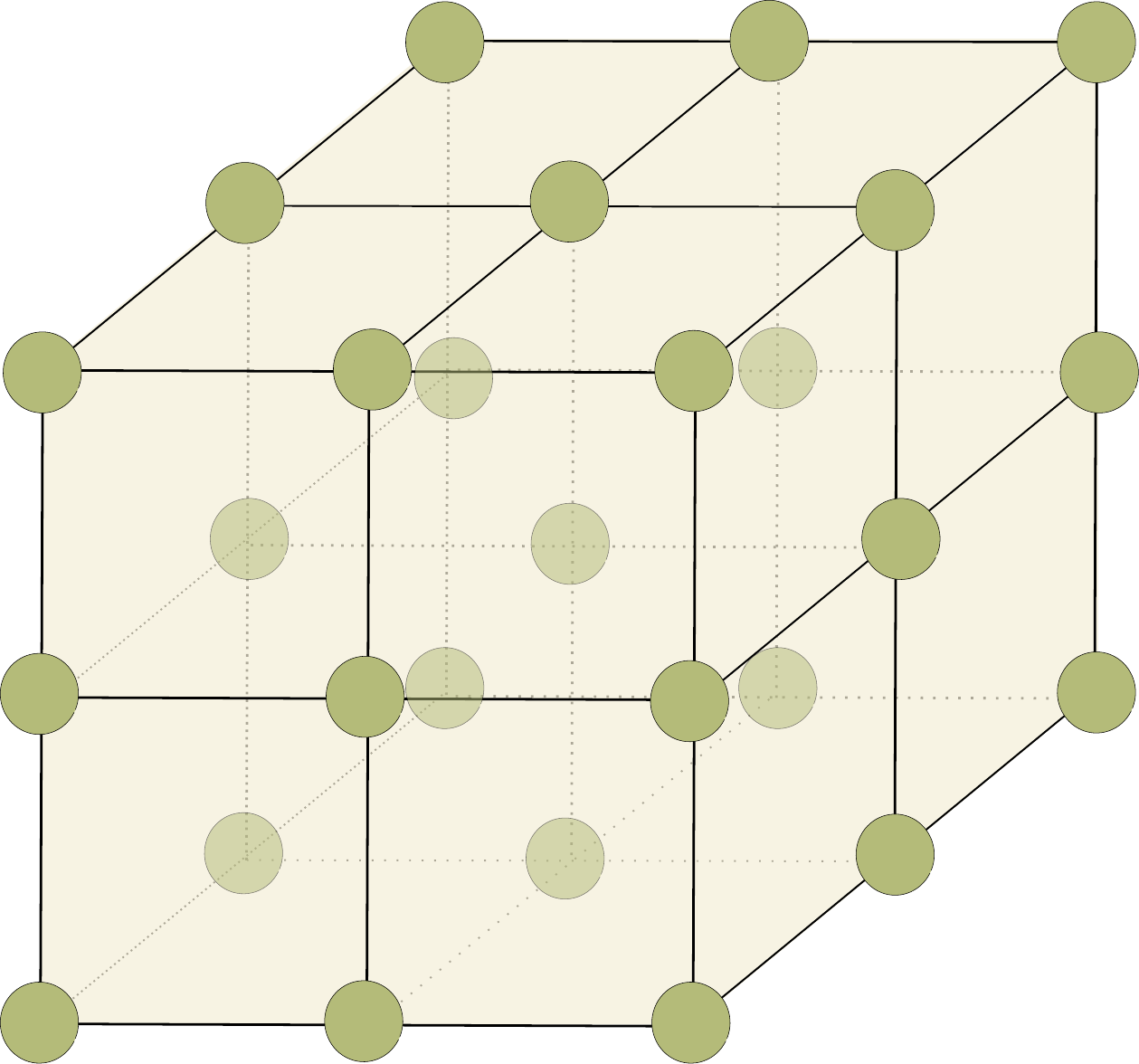}} &
  \includegraphics[width=0.55\textwidth]{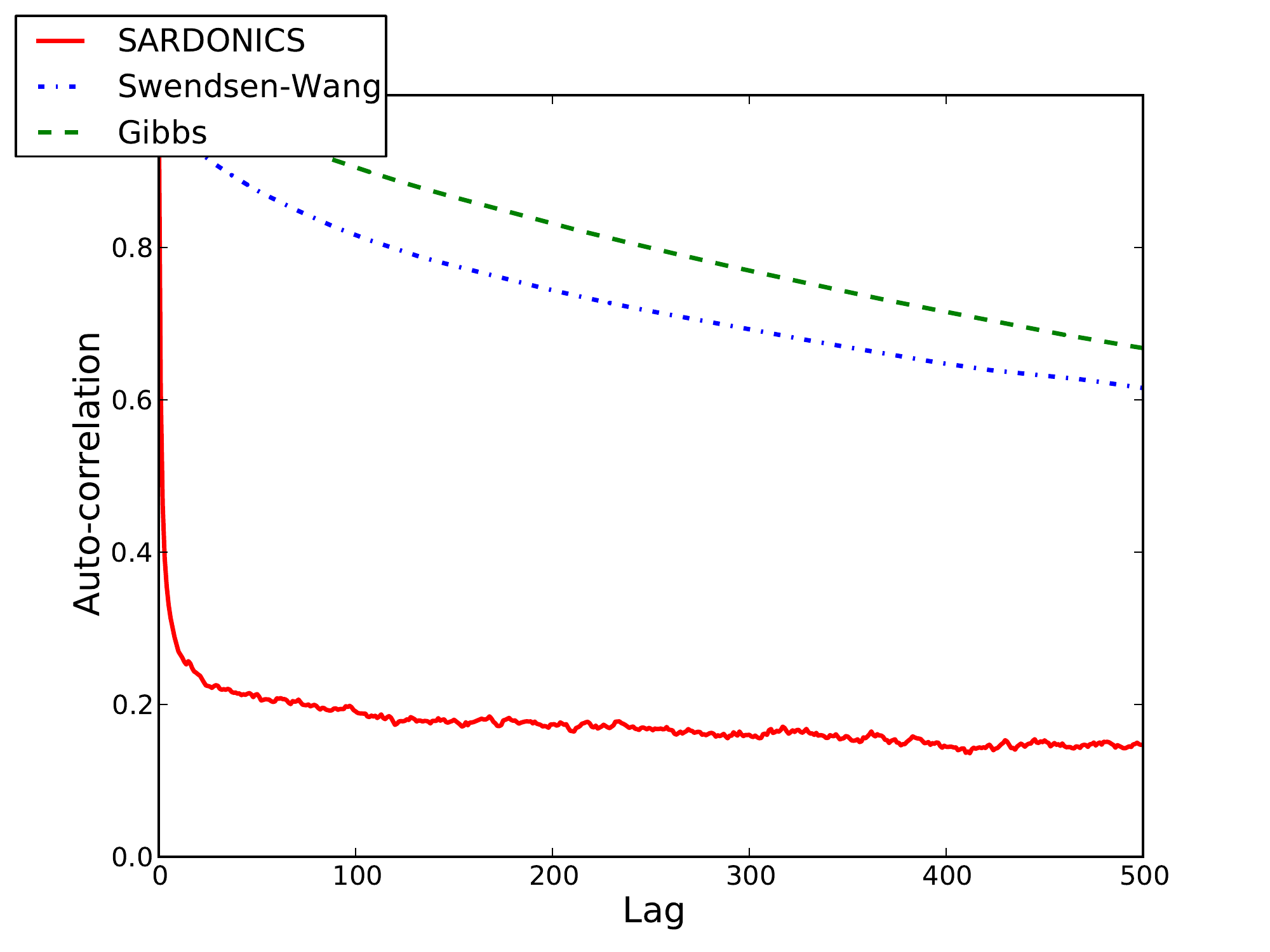} \\
    \includegraphics[width=0.45\textwidth]{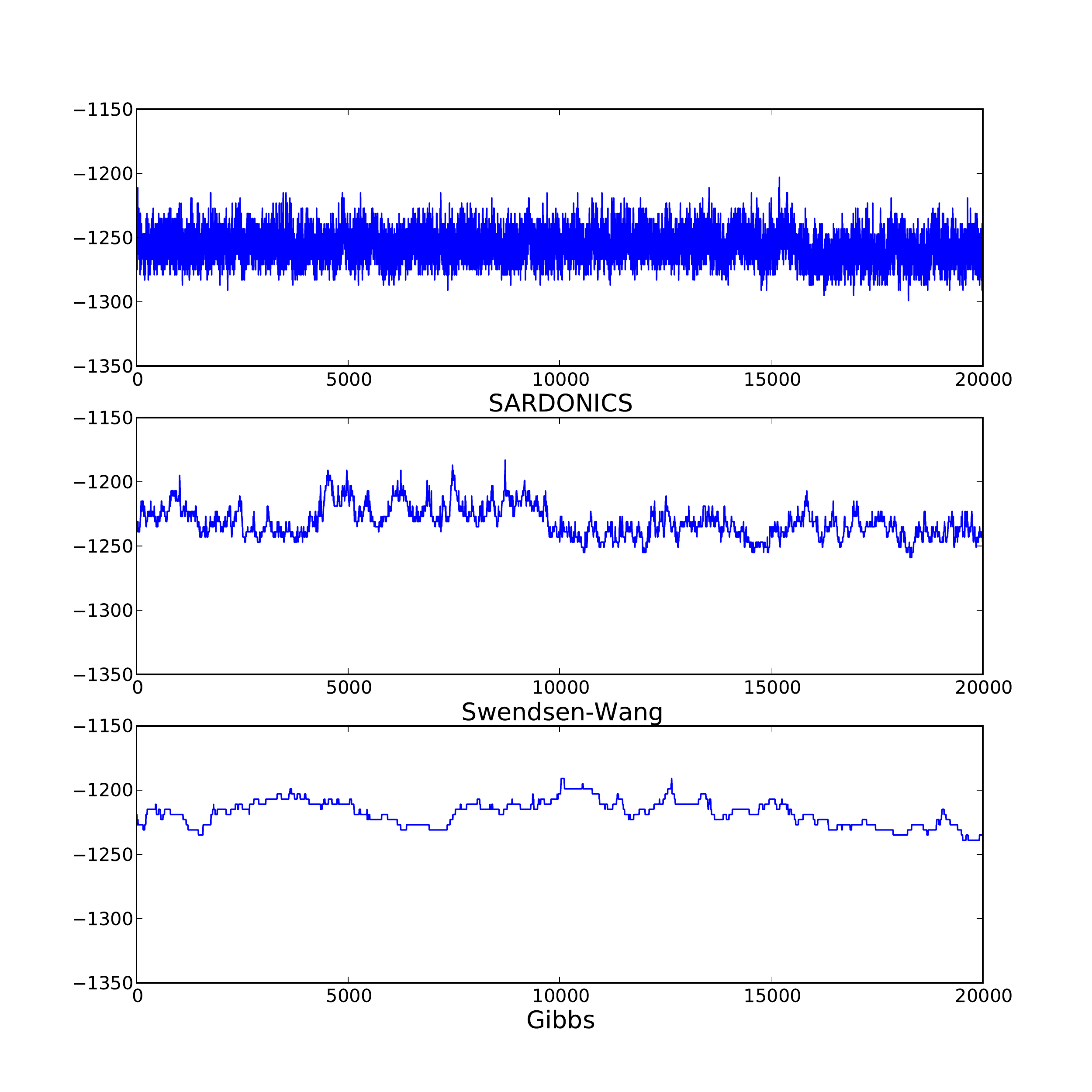} &
  \includegraphics[width=0.55\textwidth]{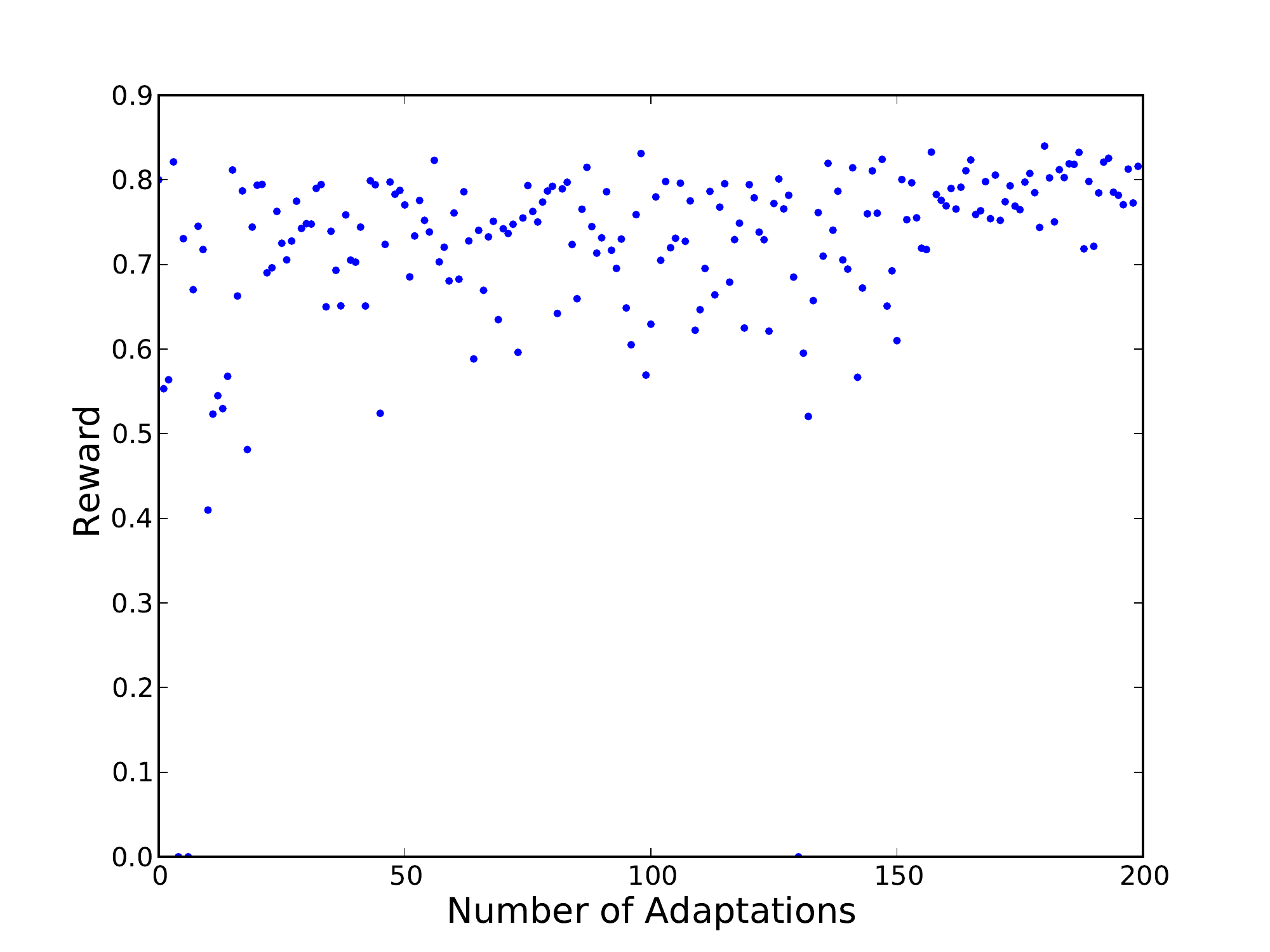}
\end{tabular*}
  \caption{Frustrated 3D cube Ising model with periodic boundaries (for visualization simplicity, the boundary edges are not shown) [top left], auto-correlations of the three samplers [top right], traces of the last 20000 of 100000 samples of the energy [bottom left] and rewards obtained by the Bayesian optimization algorithm during the adaptation phase [bottom right].}
  \label{fig:3d}
\end{figure}

While the three-dimensional-cube spin-glass is a much harder problem than the 2D ferromagnet, it represents a worst case scenario. One would hope that problems arising in practice will have structure in the potentials that would ease the problem of inference. For this reason, 
the third experimental set consisted of runs on a restricted
  Boltzmann machine \cite{Smolensky-86} with parameters trained from natural image patches via stochastic maximum likelihood \cite{Swersky-10,Marlin-10}.
RBMs are bipartite undirected probabilistic graphical models. The variables on one side are often referred to as ``visible units'', while the others are called ``hidden units''. Each visible unit is connected to all hidden units. However there are no connections among the hidden units and among the visible units. Therefore, given the visible units, the hidden units are conditionally independent and vice-versa.
Our model consisted of 784 visible and 500
hidden units. The model is illustrated in Figure~\ref{fig:rbm}.
\begin{figure}[h]
\begin{tabular*}{10 cm}{c c}
  \raisebox{1.5cm}{\includegraphics[width=0.4\textwidth]{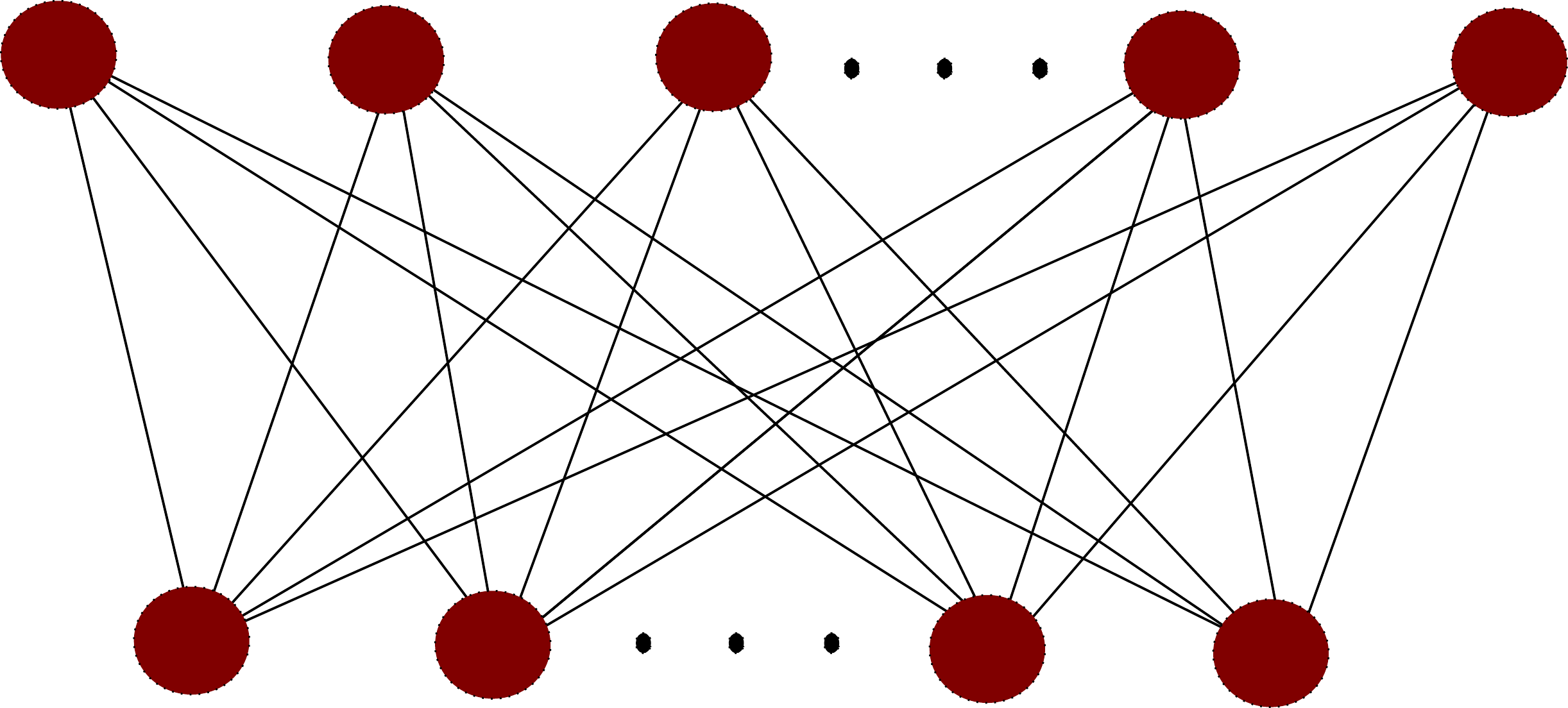}} &
  \includegraphics[width=0.55\textwidth]{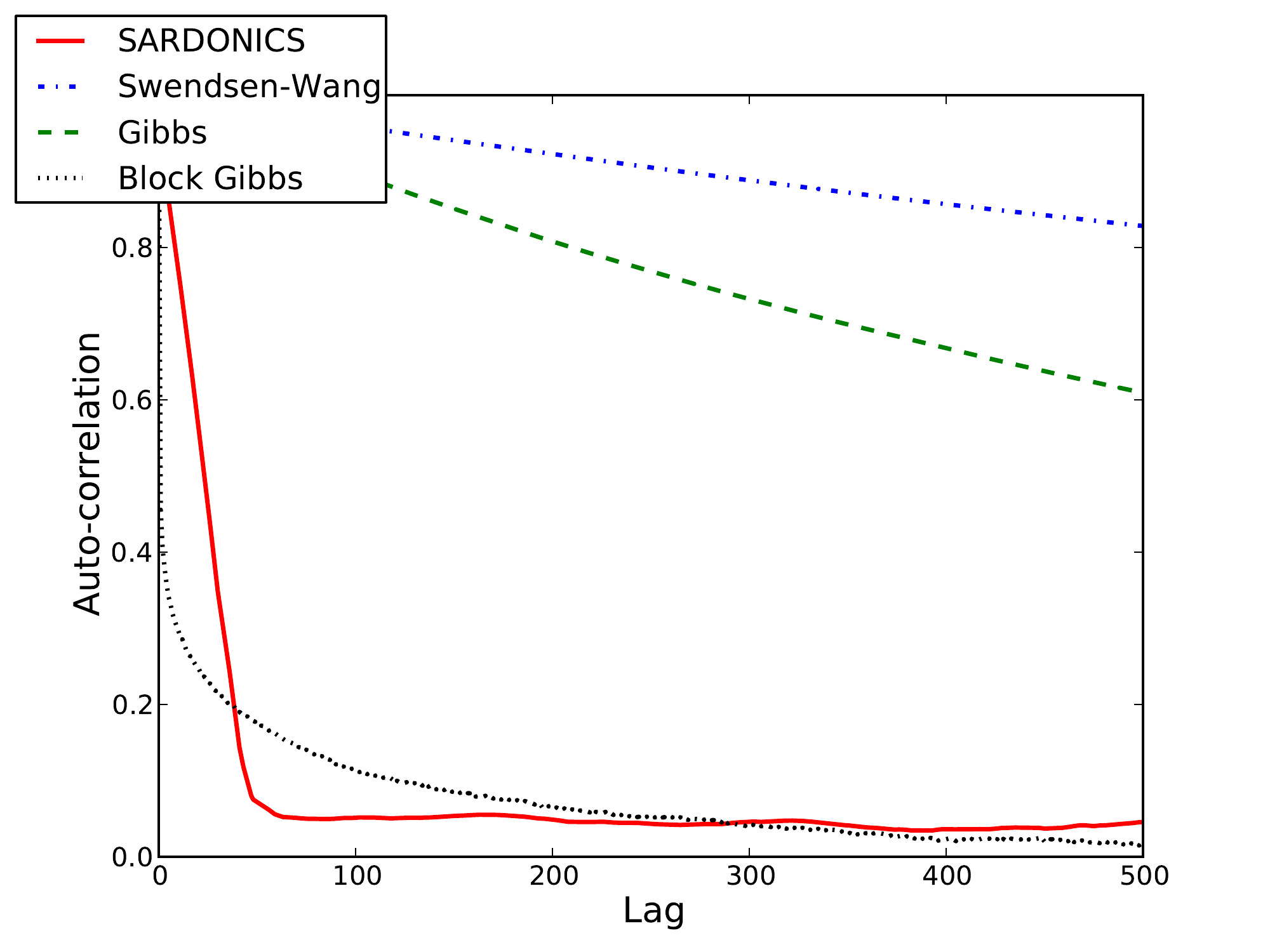} \\
    \includegraphics[width=0.45\textwidth]{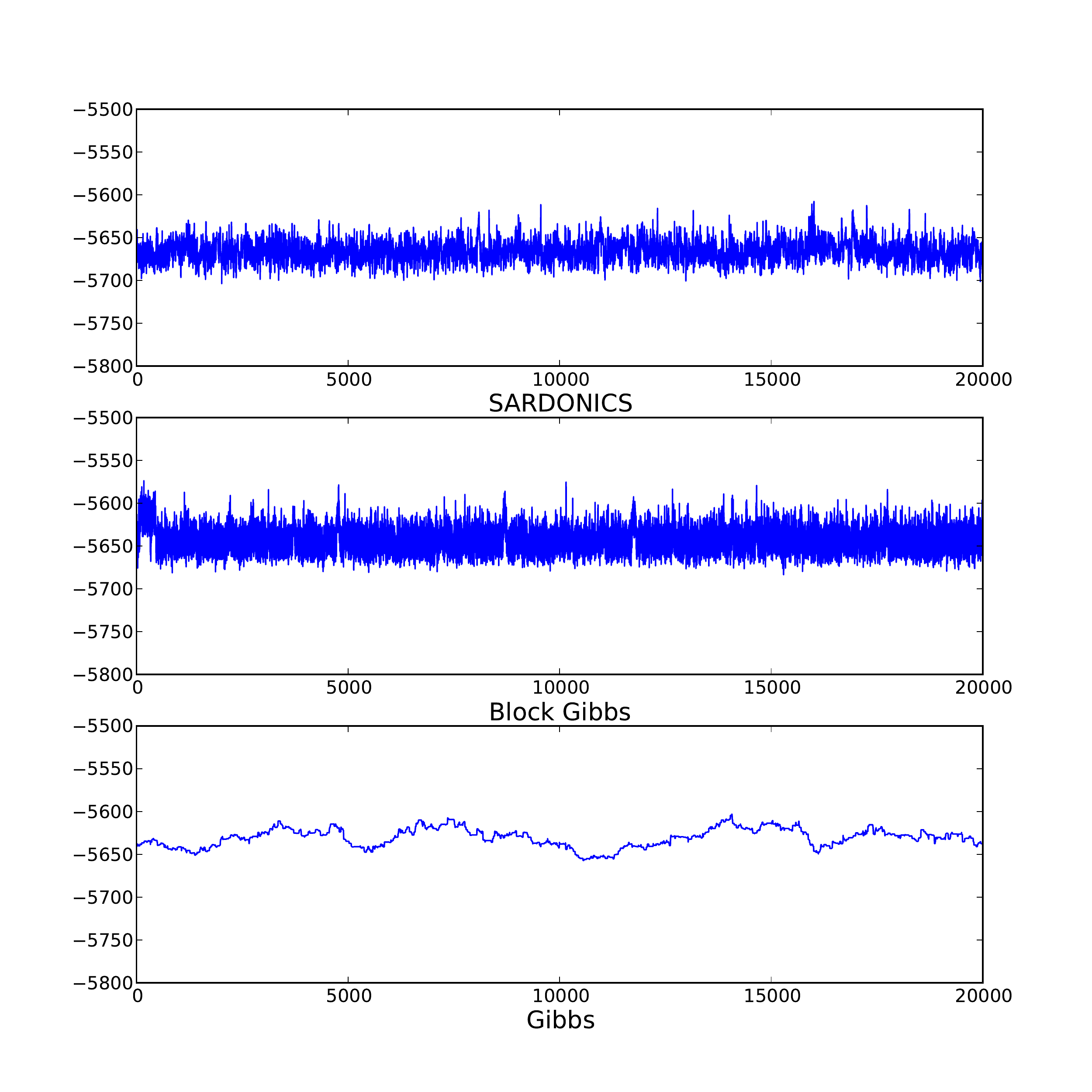} &
  \includegraphics[width=0.55\textwidth]{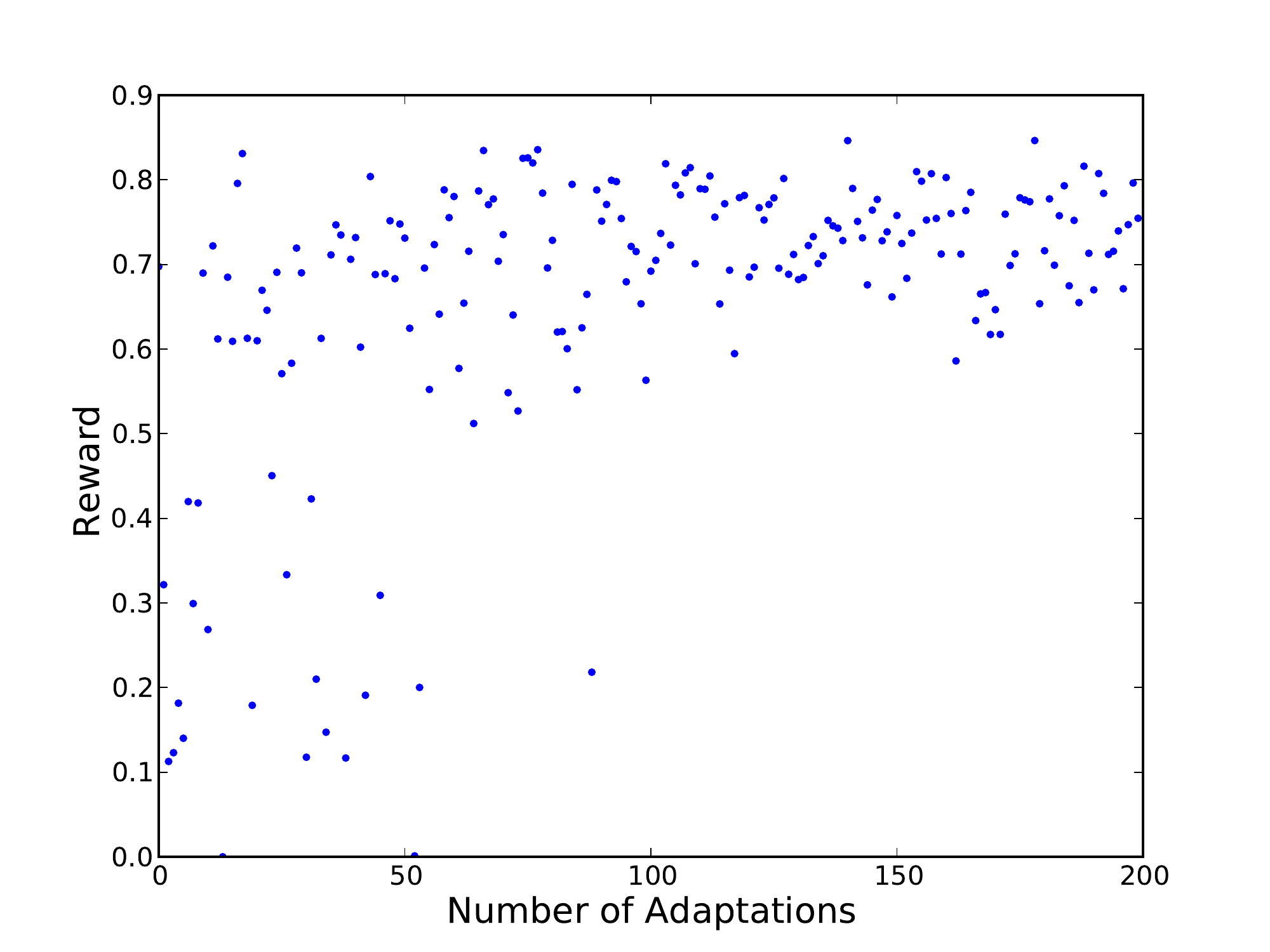}
\end{tabular*}
  \caption{Frustrated 3D cube Ising model with periodic boundaries [top left], auto-correlations of the three samplers [top right], traces of the last 20000 of 100000 samples of the energy [bottom left] and rewards obtained by the Bayesian optimization algorithm during the adaptation phase [bottom right].}
  \label{fig:rbm}
\end{figure}

\begin{figure}[h]
    \centering
  \includegraphics[width=0.5\textwidth]{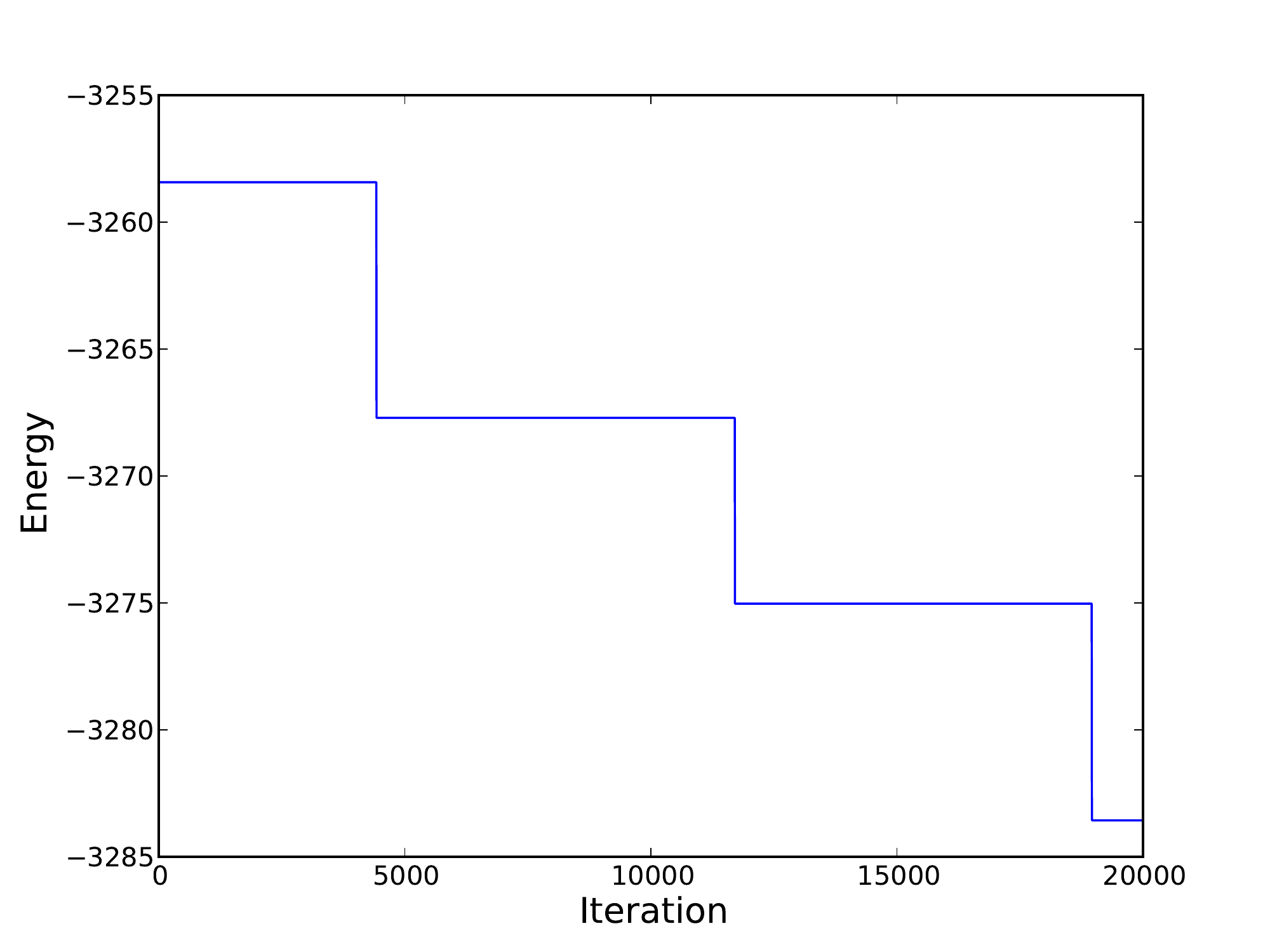} 
  \caption{Trace of the last 20000 of 100000 samples of the energy for the Swendsen-Wang sampler. As we can see, the sampler performs poorly on the RBM model.}
  \label{fig:rbmsw}
\end{figure}
\begin{figure}[h]
    \centering
  \includegraphics[width=0.45\textwidth]{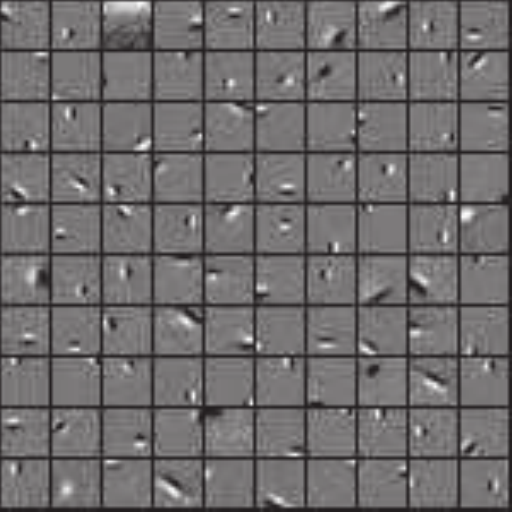} 
  \caption{RBM parameters. Each image corresponds to the parameters connecting a specific hidden unit to the entire set of visible units.}
  \label{fig:rbm2}
\end{figure}

The pre-learned interaction parameters capture local regularities of natural images \cite{Hyvarinen-09}. Some of these parameters are depicted as images in Figure~\ref{fig:rbm2}.
The parameter \( \beta \) was set to one. The total
number of variables and edges in the graph were thus 1284 and 392000 respectively. 

Figures~\ref{fig:rbm} and \ref{fig:rbmsw} show the results. Again SARDONICS mixes significantly better than Swendsen-Wang and the naive Gibbs sampler. Swendsen-Wang performs poorly on this model. As shown in Figure~\ref{fig:rbmsw}, it mixes slowly and fails to converge after $10^5$ iterations. 

For this bipartite model, it is possible to carry out block-Gibbs sampling (the standard method of choice). Encouragingly, SARDONICS compares well against this popular block strategy. This is important, because computational neuro-scientists would like to add lateral connections among the hidden units, in which case block-Gibbs would no longer apply unless the connections form a tree structure. SARDONICS thus promises to empower computational neuro-scientists to address more sophisticated models of perception. The catch is that at present SARDONICS takes considerably more time than block-Gibbs sampling for these models. We discuss this issue in greater length in the following section.

Finally, we consider a frustrated 128-bit chimera lattice that arises in the construction of quantum computers \cite{Bian-11}. As depicted in Figure~\ref{fig:chimera}, SARDONICS once again outperforms its competitors. 
\begin{figure}[h]
\begin{tabular*}{10 cm}{c c}
  \includegraphics[width=0.4\textwidth]{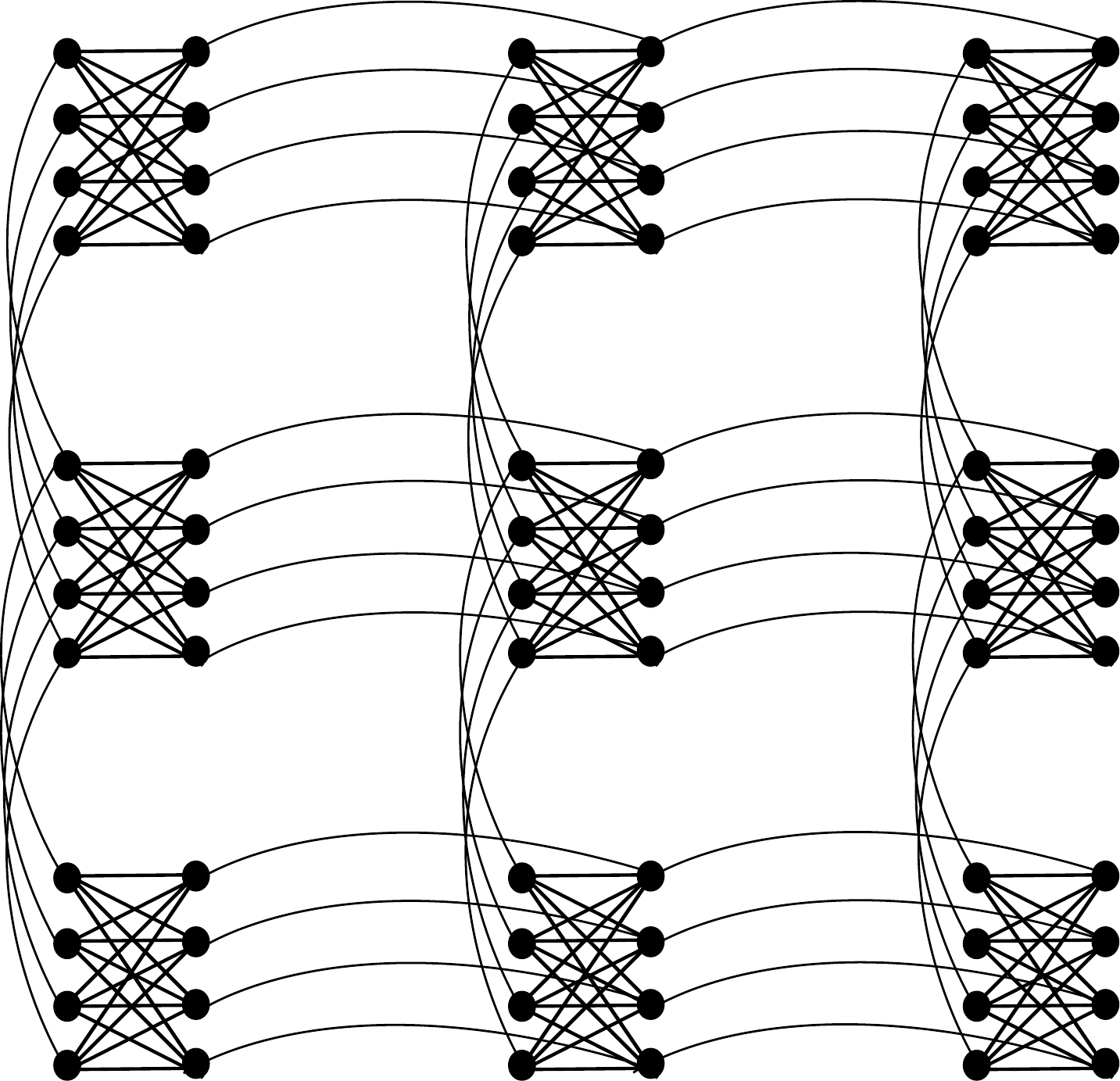} &
  \includegraphics[width=0.55\textwidth]{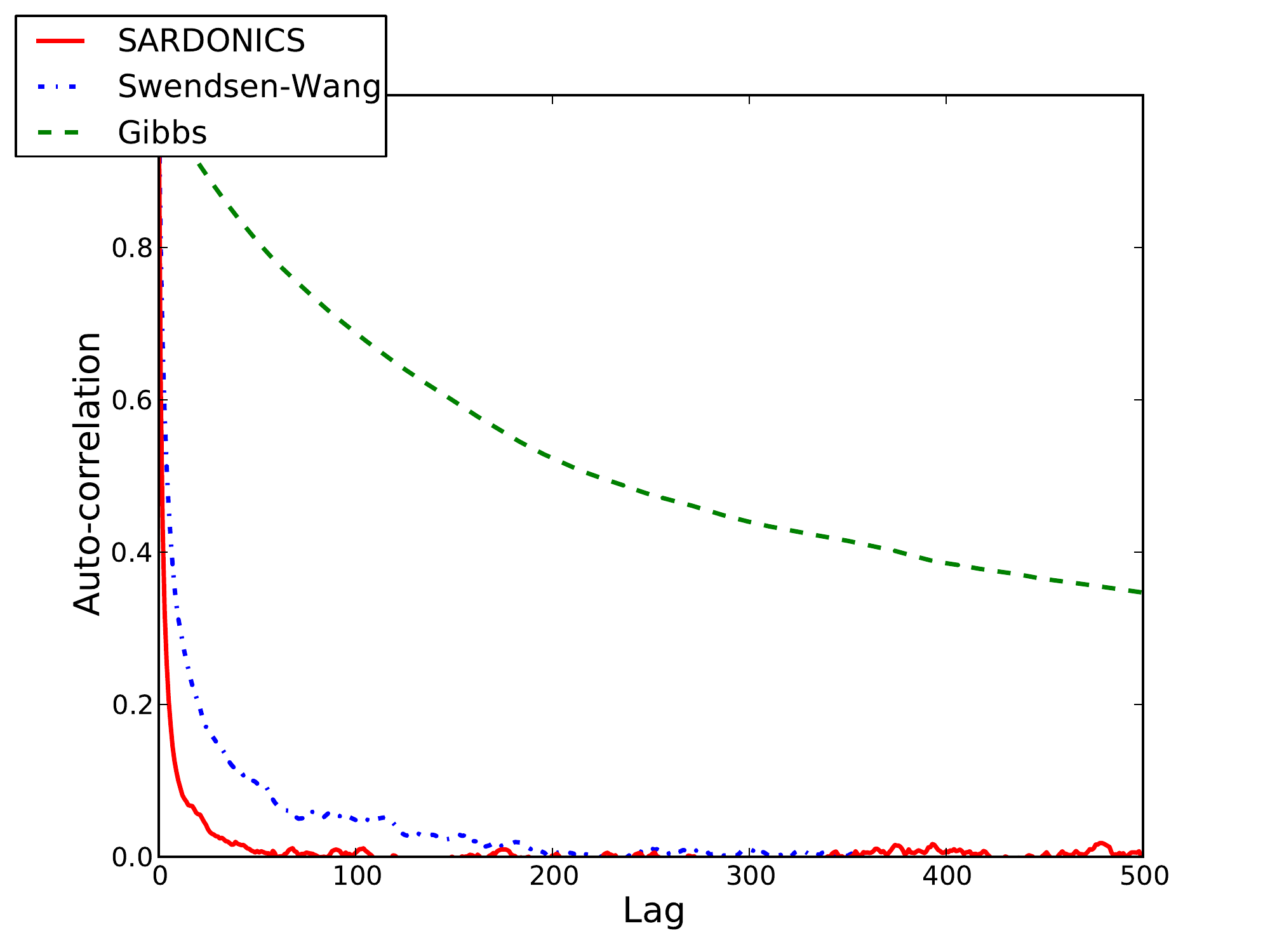} \\
    \includegraphics[width=0.45\textwidth]{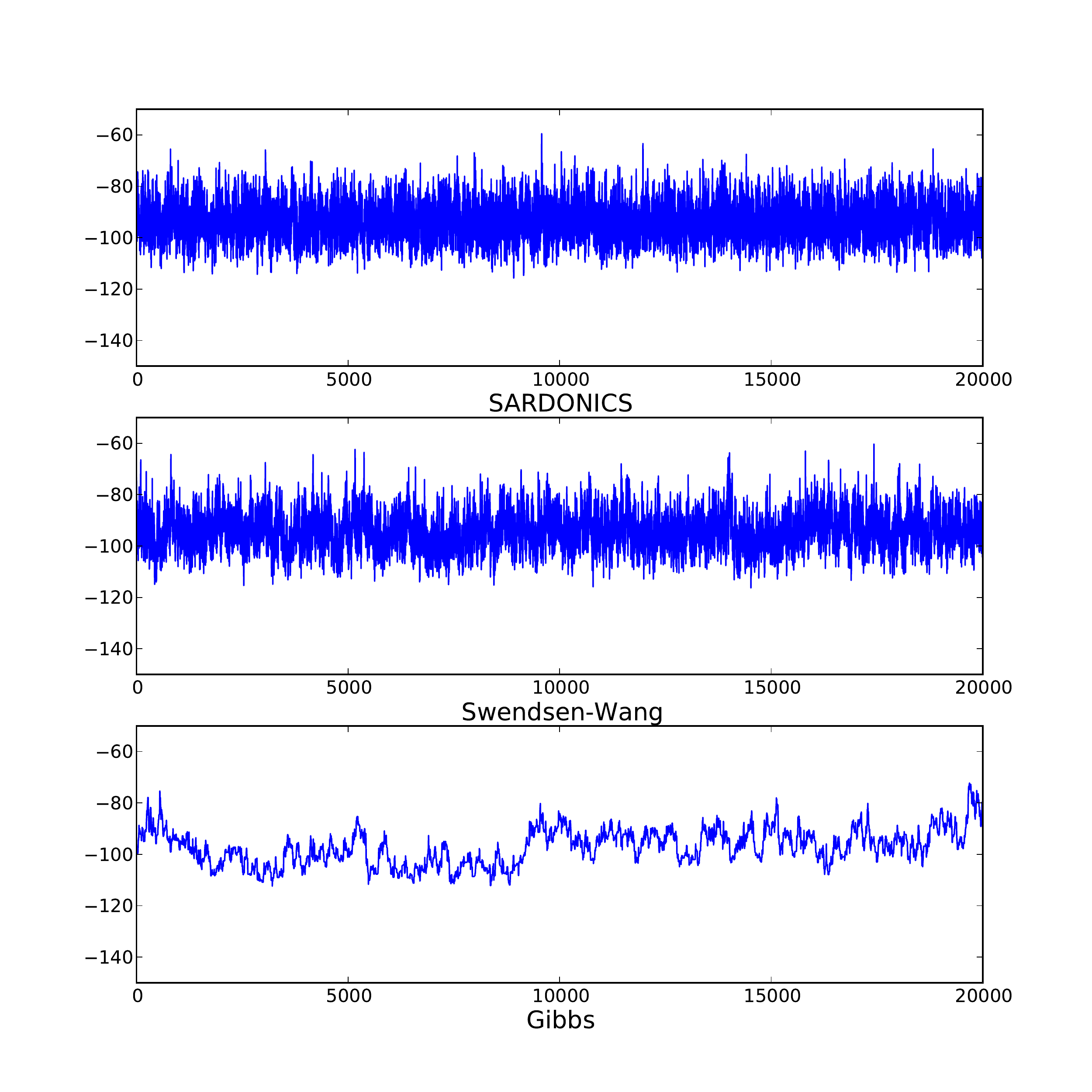} &
  \includegraphics[width=0.55\textwidth]{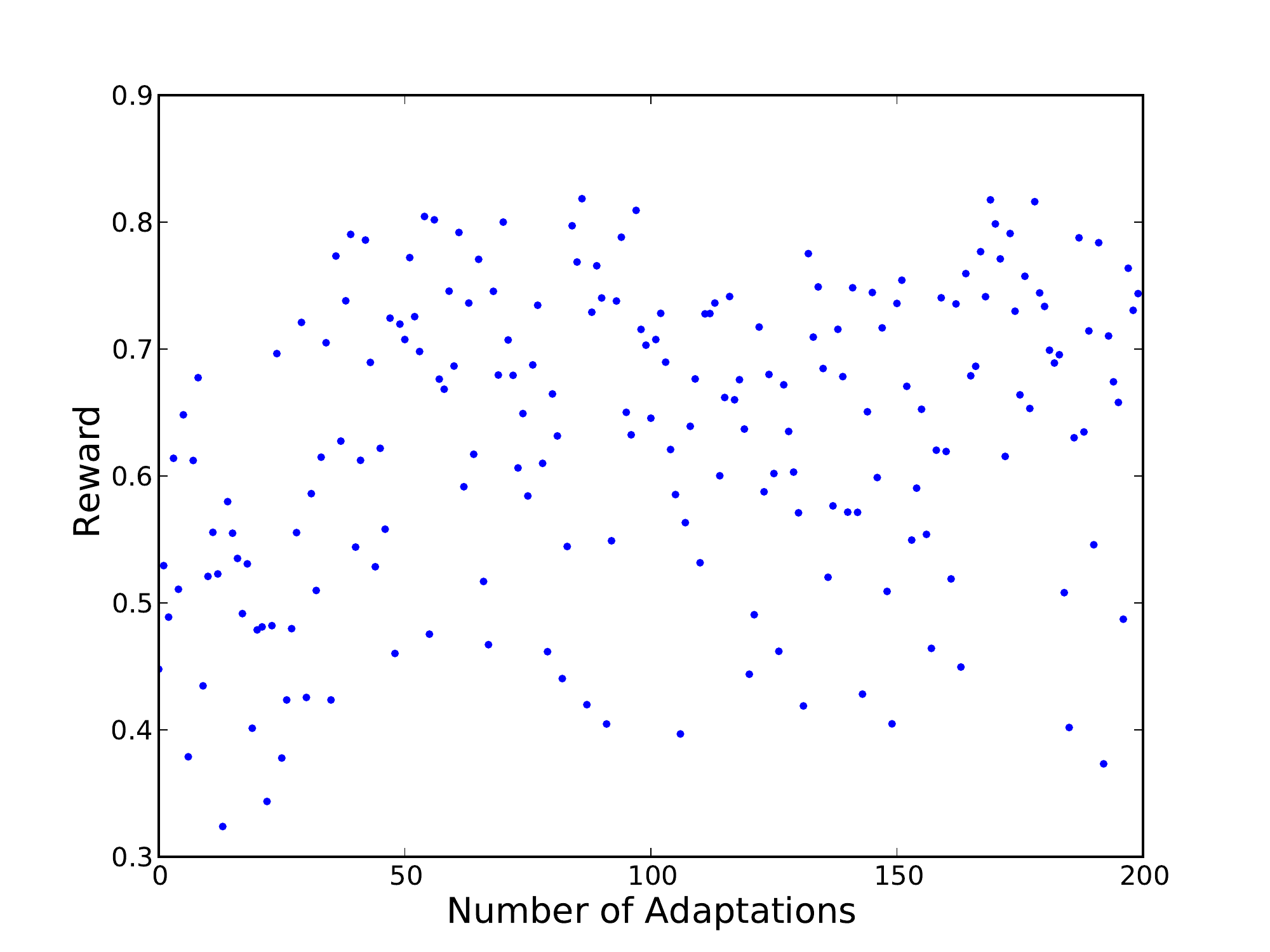}
\end{tabular*}
  \caption{Chimera lattice [top left], auto-correlations of the three samplers [top right], traces of the last 20000 of 100000 samples of the energy [bottom left] and rewards obtained by the Bayesian optimization algorithm during the adaptation phase [bottom right].}
  \label{fig:chimera}
\end{figure}

\subsection{Computational considerations}

% NOT SURE ABOUT THIS PARAGRAPH? : Before moving on to the results, it is crucial to discuss the issue of
% fairness in comparing the constant-set Metropolis algorithm to the IM
% sampler. One concern is that the SAW proposal takes considerably more
% computational time to propose a candidate state than the Metropolis
% algorithm's method of uniformly flipping and ``unflipping'' a
% variable. In all our experiments, we systematically allowed Metropolis to have more iterations so that it consumed
% the same computational time as IM at a given set of parameters. To
% compute the energy ACF generated by Metropolis, however, we subsampled
% the energy sequence: For example, if a fair comparison necessitated
% that Metropolis be given ten times more iterations than those given to
% IM, every tenth sample from the Metropolis' output sequence was taken
% when computing the ACF. Thus, as a just comparison requires, we
% calculated the ACF \emph{as a function of computational time lag}
% instead of as a function of Monte Carlo iteration lag, the latter
% which may unfairly inflate the correlation of the computationally
% cheaper Metropolis algorithm.

\begin{table}[h!]

\label{table:sampler_time}
\begin{center}
\begin{tabular}{lrrrr}
& \multicolumn{4}{c}{\textbf{Samplers}} \\
\cline{2-5} \\
 \textbf{Models} &  SARDONICS & Swendsen-Wang & Gibbs & Block Gibbs \\
\hline \\
Ferromagnetic Ising Model & 20 minutes & 50 minutes & tens of seconds & N/A \\
Frustrated 2D Ising Model & 20 minutes & 50 minutes & tens of seconds & N/A \\
Frustrated 3D Ising Model & 10 minutes & a few minutes & tens of seconds & N/A \\
RBM & 10 hours & a few hours & 20 minutes & a few minutes \\
Chimera & a few minutes & tens of seconds & tens of seconds & N/A
\end{tabular}
\end{center}
\caption{Rough computation time for each sampler on different models. 
All samplers are run on the same computer with 8 CPUs. The Swendsen-Wang sampler is coded in Matlab with 
the computationally intensive part written in C. The SARDONICS sampler is coded in Python also with its
computationally intensive part coded in C. The Gibbs and block Gibbs samplers are coded in Python. 
The Swendsen-Wang and block Gibbs samplers take advantage of parallelism
via parallel numerical linear algebra operations. The SARDONICS sampler and the Gibbs sampler, however, run on a single CPU. The adaptation time of SARDONICS is also included.}
\end{table}

The bulk of the computational time of the SARDONICS algorithm is spent
in generating states with the SAW proposal. At each step of the
process, a component from a discrete probability vector, corresponding
to the variable to flip, must be sampled. Naively, the time needed to
do so scales linearly with the length \( l \) of the vector. In
graphical models of sparse connectivity, however, it is possible
achieve a dramatic computational speedup by storing the vector in a
\emph{binary heap}. Sampling from a heap is of \( O(\log l) \), but
for \emph{sparsely connected} models, updating the heap in response to
a flip, which entails replacing the energy changes that would result
if the flipped variable's \emph{neighboring} variables were themselves
to flip, is also of logarithmic complexity. In contrast, for a densely
connected model, the heap update would be of \( O(M\log l) \) where $M$
is the maximum degree of the graph, while
recomputing the probability vector in the naive method is \( O(M)\). 
The simple method is thus cheaper for dense models.
In our experiments, we implemented SARDONICS with the binary heap despite the fact that the
RBM is a densely connected model.

% ADD A PARAGRAPH DISCUSSING PARALLEL IMPLEMENTATION.
For densely connected models, one could easily parallelize the computation of 
generating states with the SAW proposal. Suppose we have $n$ parallel processes.
Each process $P$ holds one section $U_p$ of the unnormalized probability vector $U$.
To sample from the discrete vector in parallel, each parallel process can
sample one variable $v_p$ to flip according to $U_p$. Each process also calculates 
the sum of its section of the unnormalized probability vector $s_p$.
Then the variable to flip is sampled from the set $\{v_p: 1 \leq p \leq n\}$
proportional to the discrete probability vector $[s_1, s_2,..., s_n]$. To update 
$U$ in response to a flip, we could use the naive method mentioned above.
If we divide the $U$ evenly among processes and assume
equal processor speed, sampling from and updating the unnormalized probability
vector would take \( O (\frac{M+l}{n} + n) \) operations. When $n$ is small compared
to $M + l$, we could achieve near linear speed up.

% ADD A PARAGRAPH DISCUSSING TIMING.
We compare, in table II, the amount of time it takes to draw $10^5$ samples using different 
samplers. In principle, it is unwise to compare the computational time of samplers when this comparison depends on specific implementation details and architecture. The table simply provides rough estimates.
The astute reader might be thinking that we could run Gibbs for the same time as SARDONICS and expect similar performance. This is however not what happens. We tested this hypothesis and found Gibbs to still underperform. Moreover, in some cases, Gibbs can get trapped as a result of being a small move sampler.

The computational complexity of SARDONICS is affected by the degree of the graph, whereas the complexity of Swendsen-Wang increases with the number of edges in the graph. So for large sparsely-connected graphs, we expect SARDONICS to have a competing edge over Swendsen-Wang, as already demonstrated to some extent in the experiments.

\label{sec:Experiments}

\section{Discussion}
\label{sec:Discussion}

The results indicate that the proposed sampler mixes very well in a broad range of models. However, as already pointed out, we believe that we need to consider alternatives to nonparametric methods in Bayesian optimization. Parametric methods would enable us to carry out infinite adaptation. This should solve the problems pointed out in the discussion of the results on the ferromagnetic 2D Ising model. Another important avenue of future work is to develop a multi-core implementation of SARDONICS as outlined in the previous section.

% Acknowledgments
\section*{Acknowledgments}
\label{sec:ack}
The authors would like to thank Misha Denil and Helmut Katzgraber. The Swendsen-Wang code was modified from a slower version kindly made available by Iain Murray.
This research was supported by NSERC and a joint MITACS/D-Wave Systems grant. 

% % 
% \appendix
%\input{appendix}
%\label{sec:SAWHeaps}

{
\bibliography{monteCarlo}
\bibliographystyle{plain}
}

\end{document}